\newcommand{\maxflow}{\mathsf{maxflow}}
\newcommand{\NP}{$\mathsf{NP}$\xspace}
\newcommand{\PP}{$\mathsf{P}$\xspace}
\newcommand{\FPT}{$\mathsf{FPT}$\xspace}
\newcommand{\XP}{$\mathsf{XP}$\xspace}
\newcommand{\Wone}{$\mathsf{W}[1]$\xspace}
\newcommand{\flowstop}{\textsc{Flow Prevention} \xspace }
\newcommand{\mwaycut}{\textsc{Node Multiway Cut} \xspace }
\newcommand{\mcut}{\textsc{Node Multicut}\xspace}
\def \pcddos{PC}
\def \pqddos{PQ}
\def \pciterativebestsensorddos{PCIterativeBestSensor}
\def \pqiterativebestsensorddos{PQIterativeBestSensor}
\def \net{Net}
\def \sssformulation{\textit{single super source formulation\enspace }}
\def \camelsssformulation{Single super source formulation\xspace}
\def \lblfigure{Fig.}
\newtheorem{theorem}{Theorem}
\newtheorem{corollary}[theorem]{Corollary}
\newtheorem{definition}[theorem]{Definition}
\begin{document}
\title{Exact and approximation algorithms for sensor placement against DDoS attacks}

\author{
Konstanty Junosza-Szaniawski\footnote{Warsaw University of Technology, e-mail: {k.szaniawski@pw.edu.pl}}
\and
Dariusz Nogalski\footnote{Military Communication Institute, e-mail: {d.nogalski@wil.waw.pl}}
\and
Pawe{{\l{}}} Rz\k{a}{{\.z}}ewski\footnote{Warsaw University of Technology and University of Warsaw, e-mail: {p.rzazewski@pw.edu.pl}}
}

\date{}
\maketitle
\begin{abstract}In a DDoS attack (Distributed Denial of Service), an attacker gains control of many network users through a virus. Then the controlled users send many requests to a victim, leading to its resources being depleted. DDoS attacks are hard to defend because of their distributed nature, large scale and various attack techniques. One possible mode of defense is to place sensors in a network that can detect and stop an unwanted request. However, such sensors are expensive so there is a natural question as to the minimum number of sensors and the optimal placement required to get the necessary level of safety. Presented below are two mixed integer models for optimal sensor placement against DDoS attacks. Both models lead to a trade-off between the number of deployed sensors and the volume of uncontrolled flow. Since the above placement problems are NP-hard, two efficient heuristics are designed, implemented and compared experimentally with exact mixed integer linear programming solvers.

\end{abstract}

% \blfootnote{The work was partially supported by the statutory activity of the Military Communications Institute financed by the Ministry of Science and Higher Education (Poland).}

\section{Introduction}

\subsection{Distributed Denial of Service}

Denial of Service (DoS) attacks are intended to stop legitimate users from accessing a specific network resource \cite{ZargarJT13}. A DoS attack is an attack on availability, which is one of the three dimensions from the well known CIA security triad - Confidentiality, Integrity and Availability. Availability is a guarantee of reliable access to information by authorized people. In 1999 the Computer Incident Advisory Capability (CIAC) reported the first Distributed DoS (DDoS) attack incident \cite{Criscuolo00}. In a DDoS attack, the attacker gains the control of a large number of users through a virus and then simultaneously performs a large number of requests to a victim server via infected machines. As a result of this large number of tasks, the victim server is overwhelmed and  out of resources, unable to provide services to legitimate users. 

DDoS attacks are a problem not only on the Internet \cite{RamanathanMYZ18}, but also in the context of a Smart Grid (\citeasnoun{WangDMS17},  \citeasnoun{CameronPTP19} and  \citeasnoun{HuseinovicMBU20}), Cloud \cite{BonguetB17} and Control Systems \cite{CetinkayaIH19}. According to \citeasnoun{CameronPTP19} availability is more critical than integrity and confidentiality for Smart Grid environments. 

DDoS attacks are difficult to defend against because of the large number of machines that can be controlled by botnets and participate in an attack. In consequence, an attack may be launched from many directions. A single bot (compromised machine) sends a small amount of traffic which looks legitimate, but the total traffic at the target from the whole botnet is very high. This leads to an exhaustion of resources and disruption to legitimate users (\citeasnoun{MirkovicReiher04},  \citeasnoun{RanjanSUNK09}). Another difficulty is that the attack pattern may be changed frequently. Typically, only a subset of botnet nodes conduct an attack at the same time \cite{BelabedBC18}. After a certain time, the botnet commander switches to another subset of nodes that conduct the attack.

As pointed out by \citeasnoun{ZargarJT13}, there are basically two types of DDoS flooding attacks:\\
i) Disruption of a legitimate user's connectivity by exhausting bandwidth, router processing capacity or network resources. These are essentially  \textit{link-flooding} attacks. Within this group we have \textit{Coremelt attacks} \cite{StuderPerrig09}, and \textit{Crossfire attacks} \cite{KangLG13}. Both of these attacks aim at intermediate network links located between attack sources and targets. Traditional target-based defenses do not work with these types of attacks (\citeasnoun{LiaskosIoannidis18},  and  \citeasnoun{GkounisKLD16}).
\\
ii) Disruption of a legitimate user's service by exhausting server resources (e.g., CPU, memory, bandwidth).  These are essentially \textit{target-flooding} attacks conducted at application layer.

This work addresses \textit{target-flooding} attacks with the assumption that there are multiple targets.

Some other well-known attacks are: \textit{Reflector attacks} \cite{RamanathanMYZ18} -  an attacker sends a request with a fake address (of a victim) to DNS server, and the server responds to the victim; \textit{Spoofed attacks}  \cite{ArmbrusterCSP07} - an attacker forges the true origin of packets. Detailed classifications of DDoS attacks are discussed in e.g., \citeasnoun{MirkovicReiher04},  \citeasnoun{DouligerisMitrokotsa04},  \citeasnoun{PengLR07}, \citeasnoun{ZargarJT13},  \citeasnoun{BonguetB17}, and  \citeasnoun{HuseinovicMBU20}.

A detection algorithm of DDoS attacks and the identification of an attack signature is out of the scope of this research. In the literature one can find various works in this field. 
Many works use machine learning or other artificial intelligence techniques e.g., \citeasnoun{RiosIMF21} use Multi-layer Perceptron (MLP) neural network with backpropagation, K-Nearest Neighbors (K-NN), Support Vector Machine (SVM) and Multinomial Naive Bayes (MNB); \citeasnoun{DayaSLB20} incorporate graph-based features into machine learning.
Other works focus on general methods of anomaly detection, including signature-based and profile-based methods e.g., \citeasnoun{HuangRWYX21} propose a multi-channel network traffic anomaly detection method combined with multi-scale decomposition; \citeasnoun{HwangMCCWPCN20} present an anomaly traffic detection mechanism, which consists of a Convolutional Neural Network (CNN) and an unsupervised deep learning model; \citeasnoun{ZangGH19} use the ant colony optimization (ACO) to construct the baseline profile of the normal traffic behavior.
Other related works are present e.g., \citeasnoun{LiuRH0S21}, \citeasnoun{GeraBattula18}, \citeasnoun{JiaoYZSWLWX17}, \citeasnoun{ZekriKAS17}, \citeasnoun{AssisHAP17}, \citeasnoun{KallitsisSBM16}, and \citeasnoun{AfekBBLF13}.
Comprehensive surveys of DDoS detection are also available: \citeasnoun{JafarianMGM21} overview anomaly detection mechanisms in software defined networks; \citeasnoun{KhalafMMMA19} focus on the defense methods that adopt artificial intelligence and statistical approaches.

\subsection{Sensor placement}

One of the ways to defend against a DDoS attack is to place sensors in the network which recognize and stop unauthorized demands. However, placing such sensors in every node of the network would be very expensive and inefficient. Commercial IPS (Intrusion Prevention Systems)/Firewalls solutions that detect and eliminate DDoS attacks have a high acquisition price \cite{FayazTSB15}\cite{BlazekGM19}.  Hence, a natural question arises concerning what the number of sensors should be,  and where they should be placed. The detection precision may be higher closer to attack sources since it is easier to detect spoofed addresses and other anomalies. On the other hand, the traffic closer to targets is large enough to accurately recognize an actual flooding attack. In order to efficiently control the flooding, sensors should be placed in the core of the network, where most of the traffic can be observed. A taxonomy of defense mechanisms against DDoS flooding attacks -  including source-based, destination-based, network-based, and hybrid (a.k.a. distributed) defense mechanisms is discussed in \cite{ZargarJT13}.

\citeasnoun{DefrawyMA07} formulate the problem of the optimal allocation of DDoS filters. They model single-tier filter allocation as a 0-1 knapsack problem and two-tier filter allocation as a cardinality-constrained knapsack. However, both models assume a single victim, while the models in this study allow for multiple victims.

\citeasnoun{ArmbrusterCSP07} analyze the problem of packet filter placement to defend a network against spoofed denial of service attacks. They examine the optimization problem (NP-hard) of finding a minimum cardinality set of nodes (filter placements) that filter packets so that no spoofed packet (with forged origin) can reach its destination. They relate the problem to the vertex cover problem and identify topologies and routing policies for which a polynomial-time solution to the minimum filter placement problem exists. They prove that under certain routing conditions a greedy heuristic for the filter placement problem yields an optimal solution. The paper addresses specific version of DDoS - a \textit{Spoofed attack}.

\citeasnoun{JeongCK04} and \citeasnoun{IslamNK08} minimize the number of sensors such that every path of a given length ($r$) contains a sensor. Any node less than $r$ hops away is permitted to attack another node, since the impact of the attack is regarded as low, especially for a low $r$. This paper considers the problem of sensor placement under a different assumption.

\citeasnoun{FayazTSB15} propose a Bohatei system for DDoS defense within a single ISP (Internet Service Provider). They use modern network architectures - software-defined networking (SDN) and network function virtualization (NFV) and develop the system orchestration capability to defend against a DDoS. The system addresses a resource management problem (NP-hard) to determine the number and location of defense VMs (Virtual Machines). These VMs detect and block attack traffic. After VMs are fixed, the system routes the traffic through these VMs. The goal of the resource manager is to efficiently assign available network resources to the defense, (1) minimizing the latency experienced by legitimate traffic, and (2) minimizing network congestion. The authors formulate an Integer Linear Program (ILP) to solve the resource management problem. However, due to the long computation time they apply hierarchical decomposition as well. For that purpose, they designed two heuristics, the first for data-center selection, and the second for server selection at the data-center. When it comes to routing, this paper doesn't assume any specific routing protocol, it simply assumes that it is multi-path. Additionally,  traffic is not steered through a network; it is assumed that routing is an independent problem. 

\citeasnoun{MowlaDC18}
assume SDN architecture for their proposal. They propose a cognitive detection and defense mechanism to distinguish DDoS attacks and Flash Crowd traffic. The detection sensors are placed in the OpenFlow Switches, where approaching traffic is identified and specific features are extracted. The extracted data is handed over to the SDN controller for analysis and production of security rules to defend against the attack. They use two classification techniques, namely  SVM and Logistic Regression. It must be noted that such an approach has its drawbacks specifically, a centralized SDN controller is a potential single-point-of-failure (security risk).

\citeasnoun{RamanathanMYZ18} propose a collaboration system  (SENSS) to protect against DDoS. SENSS enables the victim of an attack to request an attack monitoring and filtering on demand  from an ISP. Requests can be sent both to the immediate and to remote ISPs, where SENSS servers are located. The victim drives all the decisions, such as what to monitor and which actions to take to mitigate attacks (e.g., monitor, allow, filter). 
The number and location of monitoring sensors is not thoroughly analyzed in the research. For certain types of attack (direct floods without transport/network signature), the article suggests a location-based filtering approach that compares traffic volumes for ISP-ISP links during normal operation and during an attack. 

\citeasnoun{MonnetMBHB17} place control nodes (CN) in a clustered WSN (Wireless Sensor Network). CN detects abnormal behavior (DoS) and reports it to a cluster leader up in the WSN hierarchy. The authors propose three methods of CN placement. The first uses a distributed self-election process. A node chooses a pseudo-random number, checks the number against the threshold and potentially self-elects itself as a CN. The second method is based on the residual energy of nodes. Cluster heads select nodes with the highest residual energy. The third method is based on democratic election. Nodes vote for the nodes that will be selected as a CN.

A related problem, the design of sensor networks for measuring the surrounding environment (natural floods, pollution etc.), is addressed in many works. \citeasnoun{Khapalov10} addresses source location and sensor placement in environmental monitoring. The first problem here is linked to finding an unkown contamination source. The second concerns the placement of sensors to obtain adequate data. 
\citeasnoun{Ucinski12} focuses on the design of a monitoring sensor network to provide proper diagnostic information about the functioning of a distributed parameter system. \citeasnoun{Patan12} determines a scheduling policy for a sensor network monitoring a spatial domain in order to identify unknown parameters of a distributed system.
\citeasnoun{SuchanskiKRGZ20} study the dependency between density of a sensor network and map quality in the radio environment map (REM) concept.
There have been a large number of works on developing methods and technology of person's activity recognition and monitoring. Some use wearable devices to collect vital sign signals, some use video analysis and an accelerometer to recognize the activity pattern, other use thermal sensors. The work \citeasnoun{ChouSWH19} develops a framework to measure gait velocity (walking speed) using distributed tracking services deployed indoors (home, nursing institute). The work aims to minimize the sensing errors caused by thermal noise and overlapping sensing regions. The other goal is to minimize the data volume to be stored or transmitted. One fundamental question is how many sensors should be deployed and how these sensors work together seamlessly to provide accurate gait velocity measurements.

In the literature there is well-known class of interdiction problems, which can be related to our DDoS problem. \citeasnoun{AltnerEU10} study the \textsf{Maximum Flow Network Interdiction} Problem (MFNIP). In the MFNIP a capacitated $s-t$ (directed) network is given, where each arc has a cost of deletion, and a budget for deleting arcs. The objective is to choose a subset of arcs to delete, without exceeding the budget, that minimizes the maximum flow that can be routed through the network induced on the remaining arcs. The special case of the MFNIP when an the interdictor removes exactly $k$ arcs from the network to minimize the maximum flow in the resulting network is known as the \textsf{Cardinality Maximum Flow Network Interdiction} Problem (CMFNIP) \cite{Wood93}. One of the recent works on the interdiction problem addresses a two-stage defender-attacker game that takes place on a network whose nodes can be influenced by competing agents \cite{HemmatiCST14}. A more general problem on graphs was proposed by \citeasnoun{OmerMucherino20}, and it includes the interdiction problem. In our DDoS problem we delete vertices instead of arcs in the CMFNIP. 

\subsection{Discussion}

Defense mechanisms against DDoS flooding attacks address specific attack types:  \textit{link-flooding} \cite{StuderPerrig09}\cite{KangLG13} or \textit{target-flooding} \cite{ZargarJT13}. \textit{Link-flooding} attacks aim at intermediate network links located between attack sources and targets.  \textit{Target-flooding} directly attack targets. This research concentrates on the latter one. The attacks may use \textit{reflection} \cite{RamanathanMYZ18}, \textit{spoofing} \cite{ArmbrusterCSP07} or other techniques \cite{ZargarJT13}. The existing works concentrate on single-target while we concentrate on multiple-target attacks. The defense mechanisms against DDoS are complex systems. They need to address: identification of attack signatures and detection algorithms (out of scope of this paper), placing the detection sensors, and stopping/filtering  illegitimate traffic \cite{RamanathanMYZ18} (out of scope of this paper). Some defense approaches use attack load distribution (re-routing of traffic)  to limit the effect on targets \cite{BelabedBC18}. In this paper, the focus is on the placing of  detection sensors. There are several works in this field:  \citeasnoun{JeongCK04} and \citeasnoun{IslamNK08} minimize the number of sensors such that every path of a given length ($r$) contains a sensor; \citeasnoun{ArmbrusterCSP07} analyze the problem of packet filter placement to defend a network against spoofed denial of service attacks;   \citeasnoun{MonnetMBHB17} place control nodes in clustered WSN to save the energy of nodes; \citeasnoun{FayazTSB15} address the resource management problem to determine the number and location of defense VMs, which combines detection node placement with a re-routing strategy. This paper  concentrates on the costly deployment of detection sensors (probes) against multiple-target flooding attacks. There is no assumption of any specific routing protocol, though it is  assumed that it is multi-path. Additionally,  traffic is not steered through a network; it is  assumed that routing is an independent problem. Future work may address sensor placement with a knowledge of a specific routing protocol to increase performance in a network.

\subsection{Our proposal}

A DDoS attack can be modeled as a flow from multiple sources to a single target (single commodity flow).  Defined are directed graph with a capacity function on edges, a set of sources ($S$) and a set of targets ($T$).  An attacker can conduct an attack on any vertex  $t\in T$. The strength of an attack is given by a value of a $\maxflow_G(S,t)$, i.e., the value of the maximum flow from $S$ to $t$ in the network $G$. 

Within this DDoS defense approach  sensors are to be placed  in network nodes to recognize and stop unwanted traffic. If a sensor is placed in a vertex $v\in V$ then all the edges incident to $v$ are assumed controlled. A set $D \subseteq V$ is called a set of sensors. The goal of this defense is to limit maximum uncontrolled flow towards each $t\in T$.  Having a placement $D$, a maximum uncontrolled flow is determined and easy to compute. For that purpose, for each $t\in T$  max-flow algorithm (see for example \cite{GoldbergT14}) ca be used for a graph $G\setminus D$  ($|T|$ runs of the algorithm). A super vertex $ss$ is added to $G$, connected  with a directed edge to each $s\in S$. For each run of the algorithm ($t\in T$)  maximum flow from $ss$ to $t$ is computed. Finally,  maximum uncontrolled flow as $\max_{t\in T} \maxflow_G(ss,t)$ is computed.

In subsection \ref{problemcomplexity}  proof is given of the decision problem as to  whether $d$ sensors suffice to reduce uncontrolled flow to defined amount $a\in \mathbb{R}$. When there is just one protected node,  proof is based on reduction from  \textsf{Cardinality Maximum
Flow Network Interdiction} Problem (CMFNIP) \cite{Wood93}. 
When the number of pairs $(S,t_i)$ is more than one, the reduction goes from  \textsf{Multiway Cut} \citeasnoun{GargVY94}.% NP-hardness of mutliway cuts follows from reduction from \textit{multiterminal cuts} problem \citeasnoun{DahlhausJPSY94}.

% usunąć jeśli będzie już niepotrzebne: When the number of pairs $(S,t_i)$ is more than two the problem of multi-cut becomes NP-hard. The reduction goes from  the  \textit{multiterminal cuts} problem \citeasnoun{DahlhausJPSY94}, also known as \textit{multiway cuts} \citeasnoun{GargVY94}. In the multiterminal cuts problem we are given an edge-weighted graph and a subset of the vertices called terminals, and asked for a minimum weight set of edges that separates each terminal from all the others. When the number of terminals  is more than two the multiterminal cuts is NP-hard (proved by \citeasnoun{DahlhausJPSY94}. \citeasnoun{GargVY94} proved that the undirected 3-way edge cut problem can be reduced to the directed 2-way cut problem.

%One model generalises the edge multiterminal cut problem, and the other generalizes node multiterminal cut problem. Hence, both problems described by our models are NP-hard. 
For computational reasons two variants of the sensor placement problem are given. First, the PQ problem, where a tolerable amount $a\in \mathbb{R}$ of uncontrolled flow is set and a minimum number of sensors needed to achieve it is required. Second, the PC problem, where  the number of sensors is set and the question as to  how much uncontrolled flow we can reduce with such number of sensors is asked.  

The main result of this paper, besides the proofs of NP-hardness,  are two mixed integer models describing PQ and PC problems of  optimal sensor placement against DDoS attacks. Moreover,  two efficient heuristics (one for each problem) are presented. Finally, an experimental comparison of solutions given by the heuristics and the mixed-integer programming solvers is given.

Preliminary work on sensor placement was published as a conference paper \cite{SzaniawskiNogalskiWojcik20}.

\section{Problem definition}\label{problemform}

\subsection{The problem of optimal sensor placement}\label{problemoptsensorplace}

\noindent \textbf{Network model:} It is assumed that the network is modeled as a directed graph without multiple edges.
The node (vertex) set and the edge set are denoted, respectively, by $V$ and $E$. Every directed edge has a nonnegative capacity assigned by the function $c$. Each node in the network can be interpreted as a router or an autonomous system. 

\noindent \textbf{Protected nodes:} Let $T \subseteq V$  denote a set of \emph{protected} nodes (a.k.a target nodes) in the network.
Each node $v \in T$ contains a protected resource and is a target of a possible malicious flow. 

\noindent \textbf{Attack sources:}
We assume that network flooding targeted at protected nodes $t\in T$ can start from any network node (\emph{source}) $s\in V\setminus T$.
In a practical scenario, however, it may be desirable to  limit our attention to a set of sources $S \subseteq V\setminus T$.
The selection may be based on a node's risk analysis. It is simply a case of choosing  the vertices with unacceptable risk.
% Having a risk value attached to each node $v\in V\setminus T$ as input, we can also apply \sssformulation  (see below).

% Summing up, an instance of our problem is $G = (V,E,c,S,T)$.

\noindent \textbf{Attacks:}
%We define a set of possible attacks $P=\{(S,t_i)$,  where $t_i\in T$ is a target of an attack$\}$. 
It is not assumed which traffic from a source $s\in S$ is legitimate and which is hostile. Every potential attack starts from $S$ and is modeled as a single-commodity flow to some target $t\in T$. Routing policies allow multi-path transmissions from any $s\in S$ to $t$. 

\noindent \textbf{Sensors:}
%A detection sensor can be placed in each network node. 
When a sensor is placed at a node $v\in V$, then all the incoming and outgoing edges are assumed controlled. A set of nodes where sensors are placed is called $D$. For clarity of NP-completeness proofs it is assumed that the set $D$ is disjoint with $S\cup T$. However, in practice this assumption can be easily omitted by adding artificial copies for each source and target and joining it with the original vertex (see \lblfigure\space \ref{fig:figpcitboundex1} and \ref{fig:figpcitboundex2}).

\begin{definition}{\textbf{Attack flow}}
\label{attackflow}
For $t \in T$, a function  $f_t : E\to [0,\infty)$ is called an \emph{attack flow on $t\in T$} (or just flow, if $t$ is clear from the context) if both the following conditions are satisfied:
\end{definition}

\begin{equation}\label{eq:conservflow}
\forall_{u\in V \setminus (S \cup \{t\})} \sum_{(v,u)\in E}  f_t(v,u)=\sum_{(u,w)\in E}  f_t(u,w),
\end{equation}
and
\begin{equation}\label{eq:capconstr}
    \forall_{e\in E} \; f_t(e)\le c(e).
\end{equation}

The attack flow value is given by 
\begin{equation}\label{eq:fp}
    f_t=\sum_{(v,t)\in E} \ f_t(v,t)-\sum_{(t,w)\in E} \ f_t(t,w). 
\end{equation}
The maximum value of an attack flow on $t$ is denoted by $\maxflow_G(S,t)$.

\begin{definition}{$\mathbf{G\setminus D}$} \label{gminusd}
For an instance $G=(V,E,c,S,T)$ and a set  $D\subseteq V\setminus (S\cup T)$ of sensors,
by $G\setminus D$ we denote the instance $G' = (V,E,c',S,T)$, where $c':E\to [0,\infty)$ is defined as follows:\\
$c'(e)= 
\begin{cases} 
	0, & \text{if } e\in E_D,\\ 
    c(e), & \text{otherwise,}
\end{cases}$\\
% $\forall_{e\in E_D} \enspace c(e)=0$, 
where $E_D$ is the set of edges incident to a node in $D$.
\end{definition}

\begin{definition}{\textbf{Uncontrolled flow}}
\label{uncontrolledflow}
For an instance $G$ and a set $D$ of sensors, an \emph{uncontrolled flow to $t\in T$} is a flow to $t$ in $G\setminus D$ with positive value.
\end{definition}

For example, in \lblfigure\space\ref{fig:figcutanduncontrolledflow} all edges incident to nodes $5$ and $7$ are controlled.
However, there still exists an uncontrolled flow $f_8$ in $G\setminus \{5,7\}$.

In order to defend against a DDoS attack, sensors in a network should be placed in such a way that they can observe all or most of the traffic coming from sources $S$ to targets $T$. Placing sensors in every node of the network would be very expensive and inefficient.
Having a limited number of sensors available, it is necessary to find  a placement such that uncontrolled flows are ``distributed'' among all $t\in T$.
The situation in which some targets are left unprotected and receive a high portion of an uncontrolled traffic, so in consequence are vulnerable to DDoS attacks, should be avoided.

\begin{figure}[!b]
  \centering
  \includegraphics[width=\columnwidth]{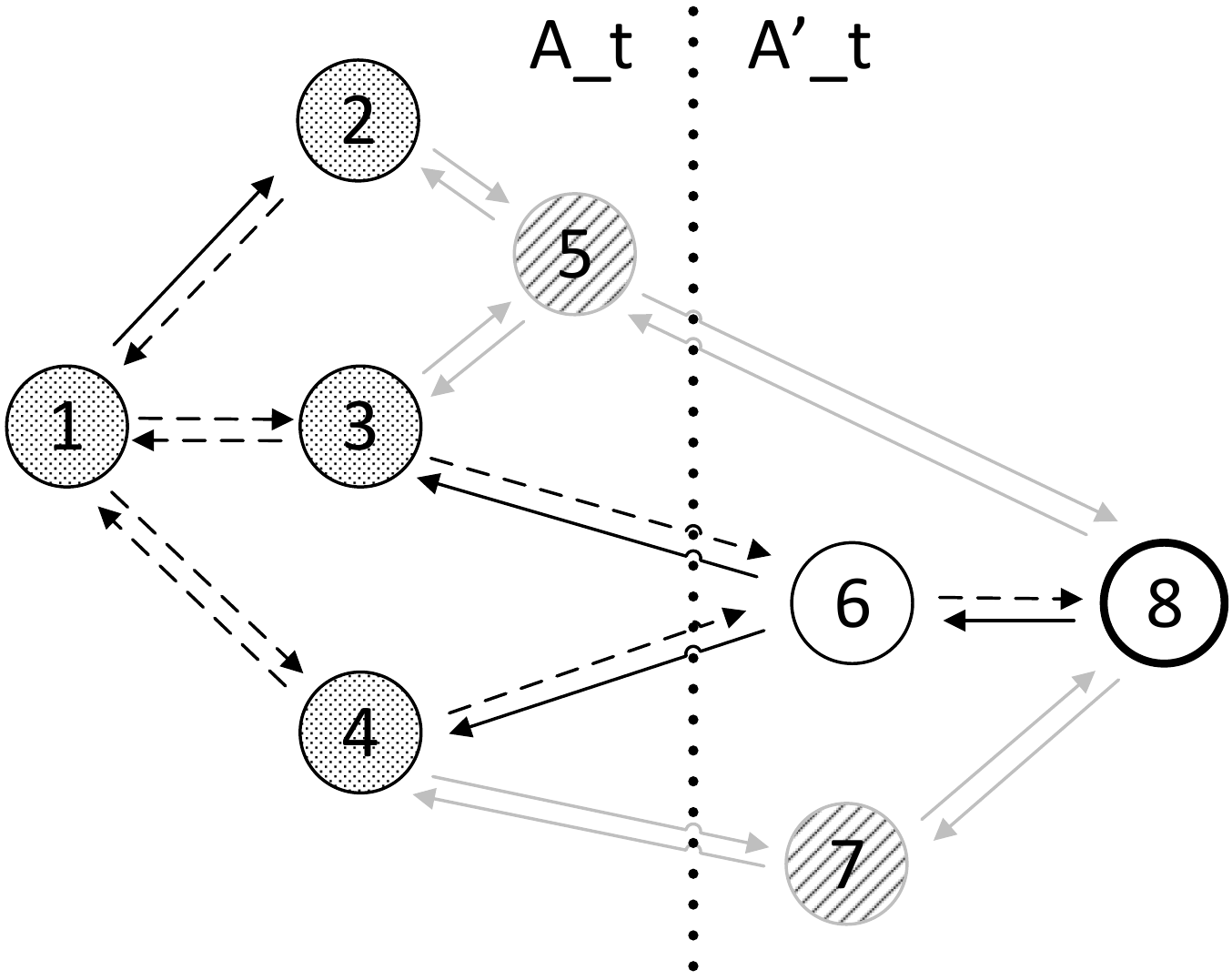}
  \caption{An instance $G$ with source (attack) nodes $S=\{1,2,3,4\}$, protected nodes $T=\{8\}$ and sensors $D=\{5,7\}$. The dotted vertical line denotes a possible \textit{cut} for $t=8 \in T$. Dashed lines denote the \textit{uncontrolled flow} $f_8$.}
  \label{fig:figcutanduncontrolledflow}
\end{figure}

In the optimization variant  two models \textit{\pqddos} (Placement with required Quality) and \textit{\pcddos} (Placement with required Cardinality) are considered. 
In the \textbf{\textit{\pqddos}} model, we want to minimize the number $k$ of sensors under the assumption that the amount of uncontrolled flow does not exceed a given value.
Formally, for a given number  $a\in\mathbb{Q}$, it is asked  what a minimum integer $k$ is such that there exists a $k$-element set $D \subseteq V\setminus (S\cup T)$ such that  
\[\max_{t\in T} \maxflow _{G\setminus D}(S,t)\le a.\]

%\kjs{wcześniej było to co poniżej ale musiałem wprowadzić  ograniczenie bezwględne na dopuszczalny przepływn  $a$, zamiast proporcjonalnego} 

%$$\max_{t\in T}maxflow(t)_{G\setminus D}\le(1-q)\cdot\max_{t\in T} maxflow_{G}(t).$$ 

For $a=0$ the question follows: what is the minimum number of sensors that guarantees the total control in the network.

In the second model, denoted by \textbf{\textit{\pcddos}}, it is  assumed the number $k$ of sensors and the task is to find a $k$-element set $D\subseteq V\setminus (S\cup T)$ such that  $\max_{t\in T}\maxflow_{G\setminus D}(S,t)$ is minimum.
Such a model is important from a practical perspective.
In many cases the number of available sensors is limited and one needs to find an optimal placement.

\subsection{Complexity of the optimal sensor placement}\label{problemcomplexity}

For the complexity analysis a decision problem \flowstop is defined:
\begin{description}
\item{\textbf{Input:}} directed graph $G=(V,E)$, capacity function $c:E\to [0,\infty)$, disjoint sets $S,T \subseteq V$, integer $k$, real number $a$,
\item{\textbf{Question:}} Does there exist a set $D \subseteq V \setminus (S \cup T)$ of size at most $k$, such that
for every $t \in T$ it holds that $\maxflow_{G \setminus D}(S,t) \leq a$?
\end{description}

The problem has several natural parameters, including $k$, $a$, $|S|$, and $|T|$. Its complexity is studded under different combinations of these parameters.

First,  simple boundary cases.
If $a=0$, then the problem asks for an $S$-$T$-separator of size at most $k$ and thus can be solved in polynomial time using standard flow techniques.
If $k$ is a constant, then the problem can be solved in polynomial time by exhaustive enumeration combined with finding the maximum flow.

Now, consider the case that $|T|=1$. This will show a reduction from CMFNIP, which is known to be \NP-hard~\cite{Wood93}.
An instance of this problem is a graph $G=(V,E)$ with edge capacities $c : E \to [0,\infty)$, two distinct distinguished vertices $s,t \in V$,  an integer $k$ and a real $a$. The question is whether we can remove at most $k$ edges so that the maximum $s$-$t$-flow in the resulting graph is at most $a$. Observe that the difference between this problem and \flowstop is that \emph{nodes}, not \emph{edges}, are removed.

\begin{theorem}{}\flowstop is \NP-complete, even if $|S|=|T|=1$.
\end{theorem}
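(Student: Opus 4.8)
The plan is to establish both membership in \NP and \NP-hardness, the latter through the reduction from CMFNIP that has just been set up. Membership is the easy direction: the set $D$ itself is a polynomial-size certificate, and given $D$ one checks $|D|\le k$ and $D\cap(S\cup T)=\emptyset$ directly, then verifies $\maxflow_{G\setminus D}(S,t)\le a$ for the single target $t$ with one maximum-flow computation, all in polynomial time.

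For hardness I would reduce an instance $(G=(V,E),c,s,t,k,a)$ of CMFNIP to an instance of \flowstop with $|S|=|T|=1$. The only substantive difficulty is the mismatch already flagged: \flowstop deletes \emph{nodes} whereas CMFNIP deletes \emph{edges}, and removing a single high-degree node would knock out many edges at once, making the node model spuriously cheaper and breaking the backward direction. The fix is to build a graph in which every deletable vertex is ``private'' to exactly one edge. Concretely, let the vertex set of $G'$ consist of $s$, $t$, and, for each edge $e=(u,v)\in E$, a pair $x_e^{\mathrm{in}},x_e^{\mathrm{out}}$ joined by an arc $(x_e^{\mathrm{in}},x_e^{\mathrm{out}})$ of capacity $c(e)$. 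I would then mirror the incidence structure of $G$ using connector arcs of a large capacity $M:=1+\sum_{e\in E}c(e)$: an arc $(x_e^{\mathrm{out}},x_{e'}^{\mathrm{in}})$ whenever the head of $e$ equals the tail of $e'$ and that common vertex is neither $s$ nor $t$; an arc $(s,x_e^{\mathrm{in}})$ for each $e$ leaving $s$; and an arc $(x_e^{\mathrm{out}},t)$ for each $e$ entering $t$. Finally set $S=\{s\}$, $T=\{t\}$ and keep $k$ and $a$ unchanged. The crucial feature is that the original internal vertices of $G$ are \emph{dissolved} into connector arcs, so they introduce no shared cut point; the only deletable vertices are the $x_e^{\mathrm{in}},x_e^{\mathrm{out}}$, each tied to a single edge $e$.

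The key step is then to verify the flow correspondence. Since $M$ strictly exceeds the total capacity $\sum_e c(e)$, the connector arcs never act as a bottleneck, and a routine argument matching a flow on edge $e$ in $G$ with flow through the gadget arc $(x_e^{\mathrm{in}},x_e^{\mathrm{out}})$ gives $\maxflow_{G'}(s,t)=\maxflow_G(s,t)$; flow conservation at each internal vertex of $G$ is recovered by summing the conservation equations at its incident gadgets. More generally, for any $D\subseteq V(G')\setminus\{s,t\}$ one has $\maxflow_{G'\setminus D}(s,t)=\maxflow_{G-E_D}(s,t)$, where $E_D:=\{e:\{x_e^{\mathrm{in}},x_e^{\mathrm{out}}\}\cap D\neq\emptyset\}$ satisfies $|E_D|\le|D|$. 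Both directions of the equivalence follow immediately: an edge-interdiction set $E^\star$ with $|E^\star|\le k$ lifts to $D=\{x_e^{\mathrm{in}}:e\in E^\star\}$, and conversely any solution $D$ projects to $E_D$ with $|E_D|\le|D|\le k$ and identical resulting flow value.

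The main obstacle, and where I would concentrate the verification, is precisely the node-versus-edge gap: one must confirm that no single node deletion in $G'$ can sever more than one edge of $G$. This is exactly what the design guarantees — one private two-vertex gadget per edge, high-capacity connectors, and no ``hub'' vertex standing in for an original internal node — so that every useful deletion corresponds to removing a single edge, and the budget $k$ transfers faithfully in both directions.
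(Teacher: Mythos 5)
Your proof is correct, and it follows the same overall route as the paper --- a reduction from CMFNIP in which each arc $e$ of $G$ is replaced by a private two-vertex gadget whose internal arc carries the capacity $c(e)$, with all other arcs given capacity exceeding $\sum_e c(e)$ so that only gadget vertices are worth deleting and each such deletion simulates deleting exactly one original arc. The two arguments differ in how they neutralize the original vertices of $G$, which is the one place where node deletion could be ``too powerful.'' The paper keeps every original vertex but blows it up into $k+1$ parallel copies, so that within budget $k$ at least one copy always survives and deleting copies is provably useless; the correctness argument then needs that observation plus the claim that deleting $e_u$ or $e_v$ corresponds to deleting $e$. You instead dissolve the internal vertices entirely, wiring gadget outputs directly to gadget inputs with a complete bipartite family of high-capacity connector arcs at each former vertex; this buys you the clean exact identity $\maxflow_{G'\setminus D}(s,t)=\maxflow_{G-E_D}(s,t)$ for \emph{every} admissible $D$, so both directions of the equivalence follow from a single lemma with no ``useless deletions'' to argue away. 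The price is a potentially quadratic number of connector arcs, $\sum_{w}\deg^-(w)\cdot\deg^+(w)$, versus the paper's factor-$(k+1)$ vertex blow-up --- both polynomial, so either is fine. Two minor points to tidy up: state explicitly that arcs of $G$ leaving $t$ (whose gadgets receive no incoming connectors in your construction) are irrelevant to the maximum $s$-$t$ flow value, and note that ``exactly $k$'' versus ``at most $k$'' deletions is immaterial since deleting more can only decrease the flow.
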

\begin{proof}{} 
Let $(G=(V,E), c, s,t, a, k)$ be an instance of CMFNIP.
Let $\bar G = (\bar V, \bar E)$ be the graph obtained from $G$ in the following way.
For every $v \in V$ we create its $k+1$ copies $v_1,v_2,..,v_{k+1}$.
For every arc $e=(u,v)\in E$ we define two vertices $e_u$, $e_v$ and edges: $u_1e_u, u_2e_u,\ldots,u_{k+1}e_u, e_ue_v, e_vv_1, e_vv_2,\ldots,e_vv_{k+1}$.
Moreover we add vertices $s_0$, $t_0$ and edges $s_0s_1,s_0s_2,\ldots,s_0s_{k+1}$, $t_1t_0,t_2t_0,\ldots,t_{k+1}t_0$. We set $S = \{s_0\}$ and $T = \{t_0\}$.
Finally, we define the capacity function $\bar c$ as follows.
For $e\in E$, we set $\bar{c}(e_ue_v) = c(e)$, and the capacities of all other arcs of $\bar G$ are set to some large integer, e.g., $\sum_{e\in E} c(e)$.
Observe that $\maxflow(\bar G, s,t) = \maxflow(G,s,t)$.
Furthermore, since our budget is only $k$, it makes no sense to remove any copy of a vertex $v$ of $G$, and there will always be at least one copy left. Finally, for $e = (u,v) \in E$, removing $e_u$ or $e_v$ in $\bar G$ corresponds to removing $e$ in $G$, and it is sufficient to remove one of these vertices.
Summing up, it is straightforward to verify that $(\bar V, \bar E, \bar c, S, T, k, a)$ is a yes-instance of \flowstop if and only if $(G,c,s,t,k, a)$ is a yes-instance of CMFNIP.
\end{proof}

Now consider the case that $|T| \geq 2$. This time we will reduce from  \mwaycut with 2 terminals, which is known to be \NP-hard \cite{GargVY94}. In this problem we are given a directed graph $G$ with two distinguished vertices $x,y$ and an integer $k$.
We ask whether we can remove at most $k$ vertices to destroy all $x$-$y$- and all $y$-$x$-paths.

\begin{theorem}{}
\flowstop is \NP-complete, even if $a=1$, $|S|=|T|=2$, and all capacities are unit.
Furthermore, it is even \NP-hard to distinguish yes-instances and those for which, for every set $D'$ of size at most $k$, 
it holds that
\[\max_{t \in T} \maxflow_{G \setminus D'}(S,t) = 2.\]
\end{theorem}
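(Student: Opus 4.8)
The plan is to reduce from \mwaycut with the two terminals $x,y$: given $(G,x,y,k)$ we must decide whether $k$ internal vertices can be deleted so that no $x$-$y$-path and no $y$-$x$-path survives. Membership in \NP is clear (guess $D$ and verify by $|T|$ max-flow computations), so the work is the reduction. I build a \flowstop instance $G'$ with $S=\{s_1,s_2\}$, $T=\{t_1,t_2\}$, all capacities $1$, and threshold $a=1$, in which $t_1$ acts as a detector for surviving $x\to y$ connectivity and $t_2$ for surviving $y\to x$ connectivity. Crucially I keep a \emph{single} copy of $G$: a sensor on an internal vertex must be paid for once yet must disable that vertex for both directions, which is exactly what a shared copy gives, so the budget $k$ is preserved. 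I attach $s_1$ to the out-side of $x$ and $t_1$ to the in-side of $y$, so a unit can travel $s_1\to x\rightsquigarrow y\to t_1$ iff an $x$-$y$-path survives, and symmetrically attach $s_2$ to the out-side of $y$ and $t_2$ to the in-side of $x$.

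On top of this I add a \emph{baseline} unit to each target via a direct edge $s_1\to t_1$ and $s_2\to t_2$; since both endpoints of such an edge are protected (sensors may not lie on $S\cup T$), this unit can never be removed. The intended invariant is
\[
\maxflow_{G'\setminus D}(S,t_i)=1+\big[\text{direction-}i\text{ connectivity survives in }G\setminus D\big],
\]
so a multiway cut of size $\le k$ drives both values to $1$, whereas if some direction always survives then the corresponding target keeps value $2$. The forward direction (multiway cut solvable $\Rightarrow$ \flowstop solvable) is then immediate: delete the same $k$ internal vertices.

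The delicate part is the converse together with the promised gap, and it is where I expect the real work. First there is \emph{interference}: because $x$ serves as a source for $t_1$ and as a sink for $t_2$ (and $y$ vice versa), a single undirected occurrence of $x$ or $y$ would let $s_2$ reach $t_1$ through $y$ with no $x$-$y$-path at all, inflating the flow to $2$ even after a valid cut. I kill this by splitting each terminal into an in-part and an out-part (equivalently, viewing $G'$ as two copies of $G$ sharing only the internal vertices), and I will have to argue that the split leaves no spurious route; any such route corresponds to a closed walk through a terminal, so this is cleanest under the assumption that the source instances are acyclic, a point I must justify. The main obstacle, however, is stopping the attacker from \emph{cheaply imitating the deletion of a terminal}: a single fan-in vertex funnelling all $x$-$y$-arrivals into $t_1$ would be a size-one vertex cut of the detector, letting a budget-$2$ sensor set kill both directions without solving the instance. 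Protecting such a vertex while keeping the detector capacity equal to $1$ (needed for the clean value ``$2$'' in the gap) are in tension, since a unit-capacity bottleneck always admits a size-one vertex cut unless its endpoints lie in $S\cup T$.

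To resolve this I would cap each detector on the \emph{target} side—$t_1$ receives exactly two unit in-edges, the cut-proof baseline and one structural edge—so that every target's value is automatically at most $2$, and I would make the terminal-proxy (fan-out/fan-in) vertices undeletable by the $k+1$-parallel-copies device used in the previous theorem, leaving the internal vertices of $G$ as the only economical deletions, each of which is precisely a legal multiway-cut move. Making these two mechanisms coexist—parallel copies give undeletability but tend to inflate capacity, whereas a single structural edge gives the capacity bound but reintroduces a size-one cut—is the heart of the construction and the step I expect to spend the most care on. Granting it, a budget-$k$ sensor set with $\max_{t}\maxflow_{G'\setminus D}(S,t)\le 1$ can only have used internal deletions and must destroy both directional connectivities, hence is a multiway cut of size $\le k$; and in a no-instance every budget-$k$ set leaves one direction alive, so that target has value exactly $2$. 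This yields \NP-completeness and the stated $1$-versus-$2$ gap simultaneously.
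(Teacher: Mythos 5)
You reduce from the right problem (\mwaycut with two terminals) and aim at the right $1$-versus-$2$ gap, but the construction is never actually completed: you yourself flag that making the undeletability device (the $k+1$ parallel copies) coexist with the unit-capacity detector bound ``is the heart of the construction,'' and you leave it as a tension to be resolved rather than resolving it. As written, the reduction has a hole exactly where you point: without a concrete gadget, a budget-$k$ sensor set could still buy a cheap vertex cut of a detector (your fan-in bottleneck), and nothing in the proposal rules this out. So this is a genuine gap, not merely a stylistic difference.

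The idea you are missing is that all of this machinery becomes unnecessary if you let the terminals themselves be the targets. The paper keeps a single copy of $G$, adds two pendant vertices $x'$, $y'$ with unit-capacity edges $x'x$ and $y'y$, and sets $S=\{x',y'\}$, $T=\{x,y\}$ (after first disposing, in polynomial time, of instances lacking a directed $x$-$y$- or $y$-$x$-path). Because sensors are forbidden on $S\cup T$, the terminals are automatically undeletable and the two source edges are automatically uncuttable, so no baseline edge or parallel-copy device is needed; and because each source has a single unit-capacity out-edge, $\maxflow_{G\setminus D}(S,t)\le 2$ holds for every $t$ and every $D$ with no capping gadget at all. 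The value at target $x$ is then exactly $1$ (from $x'x$) plus one more unit iff a $y$-$x$-path survives in $G\setminus D$, since any contribution from $y'$ must enter $G$ at $y$, and the value is \emph{net} inflow at $x$, so closed walks through a terminal contribute nothing --- which also disposes of your acyclicity worry and your ``interference'' worry about spurious routes. Hence $\max_{t\in T}\maxflow_{G\setminus D}(S,t)=1$ iff $D$ is a multiway cut and $=2$ otherwise, which yields the full statement including the inapproximability gap in two lines. Your terminal splitting, fan-in protection, and coexistence problem all evaporate once the targets and the terminals coincide.
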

\begin{proof}{}
Let $G=(V,E)$, $x$, $y$, $k$, be an instance of \mwaycut with 2 terminals.
We may safely assume that $G$ contains a directed $x-y$-path and a directed $y-x$-path,
as otherwise the problem can be solved in polynomial time by finding a minimum vertex separator.

We construct an instance of \flowstop as follows.
We start with a graph $G$. Next we add two new vertices $x'$ and $y'$, and edges $x'x, y'y$ with unit capacity.
We set $S = \{x',y'\}$ and $T = \{x,y\}$.

We observe that for every $t \in T$ it holds that $\maxflow_G(S,t) = 2$, as $G$ contains a directed $x-y$-path and a directed $y-x$-path.
Furthermore, for $D \subseteq V \setminus (S \cup T)$, it holds that $\max_{t \in T} \maxflow_{G \setminus D} (S,t) = 1$ if and only if $D$ is a multiway cut in $G$.
\end{proof}

\begin{corollary}{}
The following optimization problem admits no polynomial-time 2-approximation algorithm, unless \PP = \NP.
\begin{description}
\item{\textbf{Input:}} directed graph $G=(V,E)$, disjoint sets $S,T \subseteq V$, integer $k$,
\item{\textbf{Question:}} What is the minimum $a$, for which there is some $D \subseteq V \setminus (S \cup T)$ of size at most $k$, such that
for every $t \in T$ it holds that $\maxflow_{G \setminus D}(S,t) \leq a$?
\end{description}
\end{corollary}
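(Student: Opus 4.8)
The plan is to derive the inapproximability as a direct consequence of the second theorem (the $\mathsf{NP}$-hardness result for $a=1$, $|S|=|T|=2$, unit capacities). The key observation is that the theorem does not merely prove hardness: it establishes a \emph{gap}. Specifically, for the constructed instances, the optimum value of $a$ is either $1$ (when $G$ has a multiway cut of size at most $k$) or $2$ (when no such cut exists, so that every $D'$ of size at most $k$ leaves $\max_{t \in T} \maxflow_{G \setminus D'}(S,t) = 2$). A $2$-approximation algorithm for the optimization problem would let us distinguish these two cases in polynomial time, contradicting $\mathsf{P} \neq \mathsf{NP}$.

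First I would set up the reduction explicitly. Given an instance $(G, x, y, k)$ of \mwaycut with $2$ terminals, I build the \flowstop instance exactly as in the proof of the second theorem, and I feed $(G', S, T, k)$ to the hypothetical approximation algorithm, where $G'$ is the graph with the added vertices $x', y'$. Let $\mathrm{OPT}$ denote the true minimum value of $a$ for this instance, and let $\mathrm{ALG}$ denote the value returned by the algorithm, so that $\mathrm{ALG} \le 2 \cdot \mathrm{OPT}$ by the approximation guarantee.

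Next I would carry out the case analysis using the dichotomy from the theorem. If the original instance is a yes-instance of \mwaycut, then there is a multiway cut $D$ of size at most $k$, so $\mathrm{OPT} = 1$, and the approximation algorithm must return $\mathrm{ALG} \le 2 \cdot 1 = 2$. If the original instance is a no-instance, then by the second theorem every $D'$ of size at most $k$ yields $\max_{t \in T}\maxflow_{G \setminus D'}(S,t) = 2$, so $\mathrm{OPT} = 2$ and hence $\mathrm{ALG} \ge \mathrm{OPT} = 2$. Since all capacities are unit, the values $1$ and $2$ are the only possibilities, and in both cases $\mathrm{ALG} \in \{1, 2\}$. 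The subtlety — and the one point I would be careful about — is that a bare $2$-approximation returning $\mathrm{ALG} = 2$ does not by itself separate the cases, because $\mathrm{ALG} = 2$ is consistent with both $\mathrm{OPT}=1$ and $\mathrm{OPT}=2$. The resolution is that the optimization problem asks for the minimum $a$, and a standard $2$-approximation is required to return a \emph{feasible} value together with a witnessing set $D$; so when $\mathrm{OPT}=1$ the algorithm returns a set $D$ achieving value at most $2$, but to \emph{certify} the approximation ratio it must return the actual value it achieves, and I would instead phrase the argument as: the minimization value is $1$ in the yes-case and $2$ in the no-case, these differ by a multiplicative factor of exactly $2$, and no polynomial-time algorithm with ratio strictly below $2$ can coexist with such a gap.

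The main obstacle, therefore, is the standard boundary issue in gap-based hardness: whether the forbidden ratio is $2$ or ``strictly less than $2$.'' Because the two achievable optimum values are the consecutive integers $1$ and $2$, the multiplicative gap is exactly $2$, so the cleanest statement I would prove is that no polynomial-time algorithm can approximate the minimum $a$ within a factor \emph{better than} $2$; equivalently, achieving ratio $2 - \varepsilon$ for any $\varepsilon > 0$ is impossible unless $\mathsf{P} = \mathsf{NP}$. I would match the corollary's phrasing by noting that a ``$2$-approximation'' here is understood to mean one whose output would let us decide the gap, and the one-line verification is that distinguishing $\mathrm{OPT}=1$ from $\mathrm{OPT}=2$ is exactly the \mwaycut decision encoded by the reduction. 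Everything else is routine: the reduction is polynomial, the instances are already built in the preceding proof, and the feasibility of the value returned follows from running a max-flow computation on $G' \setminus D$ for the two targets.
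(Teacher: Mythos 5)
Your proposal is correct and follows exactly the route the paper intends: the corollary is stated without a separate proof precisely because it is the standard gap argument applied to the $1$-versus-$2$ dichotomy established in the preceding theorem, which is what you spell out. Your observation about the boundary is also apt --- since the two optimum values are the consecutive integers $1$ and $2$, the gap literally rules out ratio $2-\varepsilon$ for every $\varepsilon>0$ (an algorithm guaranteeing $\mathrm{ALG}\le 2\cdot\mathrm{OPT}$ could in principle output $2$ in both cases), so the corollary's phrasing is best read as ``no approximation with ratio better than $2$,'' a minor imprecision in the paper rather than a flaw in your argument.
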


Finally, let us consider parameterization by $k$. The problem is clearly in \XP (i.e., can be solved in polynomial time if $k$ is fixed), so it is interesting if the problem is \FPT (i.e., can be solved in time $f(k) \cdot n^{O(1)}$ on instances of size $n$, where $f$ is some computable function) and, if so, if it admits a polynomial kernel. See \citeasnoun{DBLP:books/sp/CyganFKLMPPS15} for more information about parameterized complexity classes.

Let us point out that a natural generalization of the problem is not in \FPT under standard complexity assumptions.
Consider a variant of \flowstop where to each sink $t \in T$ we have assigned a possibly distinct set $S_t$ of sources, and we ask if there is a set $D \subseteq V \setminus \bigcup_{t \in T} (S_t \cup \{t\})$ of size at most $k$, such that
for every $t \in T$ it holds that $\maxflow_{G\setminus D}(S_t,t) \leq a$.
It turns out that this problem is \Wone-hard, even if $a=0$, $|T|=4$, and $|S_t|=1$ for every $t \in T$.
Indeed, one can readily verify that the problem is equivalent to the well known \mcut problem.
The instance of this problem is a directed graph $G$, a set of pairs of vertices $(s_i,t_i)_{i=1}^p$ called terminals, and an integer $k$.
The question is whether we can remove at most $k$ nonterminal vertices so that in the resulting graph there is no $s_i$-$t_i$ path, for any $i$.
As shown by \citeasnoun{DBLP:journals/toct/PilipczukW18a}, this problem is \Wone-hard even for $p=4$.
This is a strong evidence that the problem is not in \FPT~\cite{DBLP:books/sp/CyganFKLMPPS15}.

\section{Description of models}\label{modelsdescription}

\subsubsection*{Basic formulation of \textit{\pqddos} and \textit{\pcddos} models}\label{problemformlp}To solve the problem of optimal sensor placement in the sense of models \textit{\pqddos} and \textit{\pcddos} we use mix-integer programming. Our solution is based on a well-known Ford-Fulkerson Theorem \citeyear{FordFulkerson56} stating that the maximum flow cannot exceed the minimum cut and actually, in our solution the min-cuts are minimized. To compute minimum cuts for every target $t\in T$ we introduce a set $A_t$ such that any edge $u,v$ is in a cut for $t$ if and only if $u\in A_t$ and $v\not\in A_t$ (\lblfigure\space\ref{fig:figcutanduncontrolledflow}). The set $D\subseteq V$ denotes the set of vertices in which sensors are placed. 

Formally, we define the following variables:
\begin{itemize}
\item  For every $v\in V$ a binary variable $d[v]$ with the meaning $d[v]=1$ if and only if $v\in D$ (there is a sensor in the vertex $v$).
\item For every $t\in T$ and $v\in V$ a binary variable $a[t,v]$ with the meaning $a[t,v]=1$ if and only if $v \in A_t$. The sets $A_t$  allow us to compute a cut for the target $t\in T$. 
\item For every $t\in T, e\in E$ a binary variable $cutT[t,e]$ with the meaning $ cutT[t,e]=1$ if and only if $e\in E$ belongs to a cut  in $G\setminus D$ for $t$. 
\item A real variable $M\in \mathbb{R}$, that denotes the value of the minimum cut in $G\setminus D$. 
\end{itemize}{}

In \textbf{\textit{\pqddos}} model, a function to minimize is $\sum_{v\in  V} d[v]$ with respect to the below restrictions  (\ref{eq:sourceincut})(\ref{eq:targetnotincut})(\ref{eq:edgeincut})(\ref{eq:cutbindtoq})(\ref{eq:dnotinsconstraint})(\ref{eq:dnotintconstraint}). 
The meaning of restrictions is as follows.  For every target $t \in T$ each vertex $s \in S$ belongs to $A_t$ (\ref{eq:sourceincut}). For every target $t\in T$ the vertex $t$ does not belong to $A_t$ (\ref{eq:targetnotincut}). The restriction (\ref{eq:edgeincut}) guarantees that an edge belongs to a cut if none of its ends is in a set $D$, the first vertex is in $A_t$ and the second vertex is not. The equation (\ref{eq:cutbindtoq}) bounds the  value of the cut with $a= (1-q) \cdot \max_{t\in T} \maxflow_{G}(t)$, where $q \in [0,1]$ is a quality factor (parameter to the problem), $q=1$ signifies total control ($100 \%$ traffic controlled), $q=0$ signifies no control (zero sensors placed); and $\max_{t\in T} \maxflow_{G}(t)$ is equal to the value of max minimum cut $M_t$ in $G$. The restrictions (\ref{eq:dnotinsconstraint})(\ref{eq:dnotintconstraint}) make sure that sensors cannot be placed in either $s\in S$ or $t\in T$. Obviously, the above statement which assumes  $100 \%$ control of traffic ($q=1$) gives a theoretical value, while in practice it depends on the volume of traffic flowing via links, and on the processing capacity of a detection sensor (technology).

\begin{equation}\label{eq:sourceincut}
\forall_{t \in T} \enspace \forall_{s \in S} \enspace a[t,s]==1
\end{equation}

\begin{equation}\label{eq:targetnotincut}
\forall_{t \in T} \enspace a[t,t]==0
\end{equation}

\begin{equation}\label{eq:edgeincut}
\begin{split}
&\forall_{t \in T} \enspace \forall_{(u,v) \in E} \enspace \\ &cutT[t,u,v] 
                        \ge a[t,u] - a[t,v] - d[u] - d[v]
\end{split}
\end{equation}

\begin{equation}\label{eq:cutbindtoq}
\begin{split}
\forall{t\in T} &\sum_{(u,v)\in E} cutT[t,u,v] \cdot c[u,v] \le a
\end{split}
\end{equation}

\begin{equation}\label{eq:dnotinsconstraint}
    \forall{s\in S} \enspace d[s]=0
\end{equation}

\begin{equation}\label{eq:dnotintconstraint}
    \forall{t\in T} \enspace d[t]=0
\end{equation}

In \textbf{\textit{\pcddos}} model, a function to minimize is just $M$ with respect to the restrictions (\ref{eq:sourceincut})(\ref{eq:targetnotincut})(\ref{eq:edgeincut})(\ref{eq:dnotinsconstraint})(\ref{eq:dnotintconstraint})(\ref{eq:nrofsensorsfixed})(\ref{eq:cutltm}). The meaning of restrictions is as follows. The restriction (\ref{eq:nrofsensorsfixed}) makes sure that the number of sensors is fixed, and given as parameter $k$ to the problem. The equation (\ref{eq:cutltm}) bounds the  value of the cut with $M$. 

\begin{equation}\label{eq:nrofsensorsfixed}
\sum_{v\in  V} \enspace d[v]=k
\end{equation}

\begin{equation}\label{eq:cutltm}
\forall{t\in T} \enspace \sum_{(u,v)\in E}  \enspace cutT[t,u,v] \cdot c[u,v]\le  M
\end{equation}

As shown in section \ref{computationalresults}, the  above models are very efficient in terms of the number of deployed sensors and  a volume of uncontrolled flow. On the other hand, when the number of vertices is high (large scale networks) the models may suffer from increased execution time. That is  why we designed and implemented two efficient heuristics (one for each model,  section \ref{algdesc}); they are reasonably efficient in terms of a goal value, but much faster than the models.

\section{Algorithms description}\label{algdesc}

\subsubsection*{Relaxed  formulation of \textit{\pqddos} and \textit{\pcddos} models}\label{problemformrlp}
In this formulation we relax two types of  variables to allow the fractional sensor placement (first bullet) and fractional traffic control (second bullet). Let us notice that fractional sensor placement is an artificial concept without physical interpretation and defined only as an intermediate step, not present in the final step of the algorithm.
\begin{itemize}
\item For every $v\in V$ a real variable  $d[v] \in [0,1]$ 
\item For every $t\in T, e\in E$ a real variable $cutT[t,e] \in [0,1]$. 
\end{itemize}
In the basic model formulation (section \ref{modelsdescription}) when an edge $u, v$ is in a cut for some $t$ ($u\in A_t$ and $v\not\in A_t$), placing a sensor in either $u$ or $v$ classifies such an edge as fully controlled. When no sensor is placed in either $u$ nor $v$ such an edge is uncontrolled. However, in the relaxed formulation we allow fractional sensor placement ($d$ variables) and fractional control of edges in a cut ($cutT$ variables). 
% \dncomment{For example, when $d[u]=0.4$ and $d[v]=0.1$ this implies that a half of a traffic flowing via $u, v$ edge is controlled. Whereas $d[u]=0.5$ and $d[v]=0.5$ implies that the edge $u, v$ is fully controlled.}

To solve the \textit{\pqddos} and \textit{\pcddos} problems, additionally to our two models (section \ref{modelsdescription}), we have designed and implemented two algorithms:
\begin{enumerate}
\item \textit{\pqiterativebestsensorddos} (see alg~\ref{alg:algpqiterativebestsensorddos})
\item \textit{\pciterativebestsensorddos} (see alg~\ref{alg:algpciterativebestsensorddos}).
\end{enumerate}

Both algorithms assume the following common \textit{input} parameters: $G$ graph representing a network with $c$ capacity function, $T$ set of targets and $S$ set of sources. Additionally, \textit{\pqiterativebestsensorddos} heuristics takes \textit{q} (quality factor) as input and \textit{\pciterativebestsensorddos} heuristics \textit{k} (number of sensors) as input.

\subsection{PQ Iterative Best Sensor Placement}
The preparatory step of the algorithm \textit{\pqiterativebestsensorddos} is a computation of the value of $a= (1-q) \cdot  \max_{t\in T} \maxflow_{G}(t)$ (line \ref{lnalgpqiterativebestsensorddossolvepck0}). In each while loop, linear program relaxation is solved (line \ref{lnalgpqiterativebestsensorddossolve1}). From the relaxed LP solution a subset of vertices $L$ is selected from the set $V\setminus D$ such that $d[v]\neq 0$ and $ d[v]==\max\{d[j]\}_{j\in V\setminus D}$ (line \ref{lnalgpqiterativebestsensorddoselectl}). Among the $|L|$ best sensor locations, the single best (max) one $v_{\max}$ is selected and added to the model as a constraint (line \ref{lnalgpqiterativebestsensorddosaddconsdmax}). The constraint fixes a sensor in the location $v_{\max}$ in the next iterations.

\begin{algorithm}[!h]
\caption {\pqiterativebestsensorddos}\label{alg:algpqiterativebestsensorddos}
\begin{algorithmic}[1]
\REQUIRE $G,c,T,S,q$
\STATE Compute a value of $a=(1-q) \cdot \max_{t\in T} \maxflow_{G}(t)$\label{lnalgpqiterativebestsensorddossolvepck0}
\STATE Create the relaxed \textit{\pqddos} $problem$ (section \ref{problemformrlp}) with goal $minimize \sum_{v\in  V} d[v]$. Add constraints $\{$(\ref{eq:sourceincut}),(\ref{eq:targetnotincut}),(\ref{eq:edgeincut}),(\ref{eq:cutbindtoq}),{(\ref{eq:dnotinsconstraint}),(\ref{eq:dnotintconstraint})}$\}$ to the $problem$
\STATE Initiate a set of vertices in which we place sensors $D=\emptyset$
\WHILE{$(\exists{t\in T} \sum_{(u,v)\in E} cutT[t,u,v] \cdot c[u,v] > (1-q) \cdot \max_{t\in T} \maxflow_{G}(t))$}
    \STATE Solve the $problem$\label{lnalgpqiterativebestsensorddossolve1}
    \STATE Let $L=\{v$, s.t. $v \in V\setminus D$ and $d[v] \neq 0$ and $ d[v]==\max\{d[j]\}_{j\in V\setminus D} \}$\label{lnalgpqiterativebestsensorddoselectl}
    \STATE Choose randomly $v_{\max} \in L$, where probability of selecting an element $v_{\max}$ equals $\frac{1}{|L|}$\label{lnalgpqiterativebestsensorddosrandommax}
    \STATE Add constraint $d[v_{\max}]==1$ to the $problem$\label{lnalgpqiterativebestsensorddosaddconsdmax}
    \STATE $D=D \cup \{v_{\max}\}$
\ENDWHILE
\RETURN $D$
\end{algorithmic}
\end{algorithm}

\subsection{PC Iterative Best Sensor Placement}
The algorithm \textit{\pciterativebestsensorddos} constitutes $k+1$ iterations. In each $\{1,..,k\}$ iteration, linear program relaxation is solved (line \ref{lnalgpciterativebestsensorddossolve1}). From the relaxed LP solution a subset of vertices $L$ is selected from the set $V\setminus D$ such that $d[v]\neq 0$ and $ d[v]==\max\{d[j]\}_{j\in V\setminus D}$ (line \ref{lnalgpciterativebestsensorddoselectl}). Among the $|L|$ best sensor locations, the single best (max) one $v_{\max}$ is selected and added to the model as a constraint (line \ref{lnalgpciterativebestsensorddosaddconsdmax}). The constraint fixes a sensor in the location $v_{\max}$ in the next iterations.\\
In the last iteration, the LP relaxation is solved assuming fixed sensor placements for all $v \in D$ (line \ref{lnalgpciterativebestsensorddossolve2}) to compute the final value of $M$.

We show that the algorithm \textit{\pciterativebestsensorddos} may give a result $2\cdot OPT$. In \lblfigure\space\ref{fig:figpcitboundex1} we compare the optimal solution \textit{OPT} given by \textit{\pcddos} model (a) to the solution given by \textit{\pciterativebestsensorddos}  (b)(c). We assume two sources $S=\{1,2\}$ and two targets $T=\{7,8\}$, and we require to place $k=1$ sensors. The optimal solution is $M=1$ (a). Then one fractional solution given by the heuristics with its corresponding rounding is given. The (b)(c) results in a sub-optimal solution $M=2$, which is equal to $2\cdot OPT$.  An additional example where the algorithm \textit{\pciterativebestsensorddos} gives a result $\frac{3}{2}\cdot OPT$, is given in  \lblfigure\space\ref{fig:figpcitboundex2}.

However,  for  practical  scenarios  the  heuristics  exposes  a solid ratio (see chapter \ref{computationalresults}).

\begin{algorithm}[!h]
\caption {\pciterativebestsensorddos}\label{alg:algpciterativebestsensorddos}
\begin{algorithmic}[1]
\REQUIRE $G,c,T,S,k$
\STATE  Create the relaxed \textit{\pcddos} $problem$ (section \ref{problemformrlp}) with goal $minimize \ M$. Add constraints $\{$(\ref{eq:sourceincut}),(\ref{eq:targetnotincut}),(\ref{eq:edgeincut}),{(\ref{eq:dnotinsconstraint}),(\ref{eq:dnotintconstraint})},(\ref{eq:nrofsensorsfixed}),(\ref{eq:cutltm})$\}$ to the $problem$.
\STATE Initiate a set of vertices in which we place sensors $D=\emptyset$
\FOR{$i=1,..,k$}
    \STATE Solve the $problem$\label{lnalgpciterativebestsensorddossolve1}
    \STATE Let $L=\{v$, s.t. $v \in V\setminus D$ and $d[v] \neq 0$ and $ d[v]==\max\{d[j]\}_{j\in V\setminus D} \}$\label{lnalgpciterativebestsensorddoselectl}
    \STATE Choose randomly $v_{\max} \in L$, where probability of selecting an element $v_{\max}$ equals $\frac{1}{|L|}$\label{lnalgpciterativebestsensorddosrandommax}
    \STATE Add constraint $d[v_{\max}]==1$ to the $problem$\label{lnalgpciterativebestsensorddosaddconsdmax}
    \STATE $D=D \cup \{v_{\max}\}$
\ENDFOR
\STATE Solve the $problem$ to compute $M$\label{lnalgpciterativebestsensorddossolve2}
%\STATE Retrieve $M$ from the $problem$ solution
\RETURN $(D,M)$
% \STATE \textbf{return} $(D,M)$
\end{algorithmic}
\end{algorithm}

\begin{figure}[!b]
    \centering
\begin{subfigure}[The model \textit{\pcddos}: M=1 (OPT)]{\includegraphics[scale=0.50]{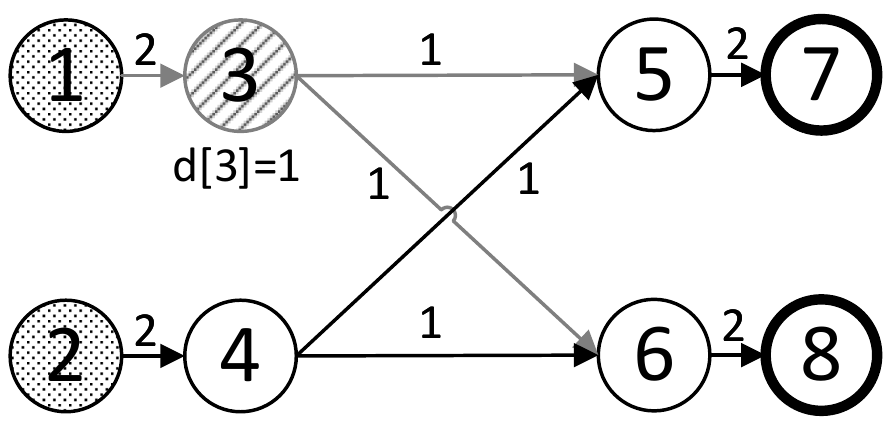}}         
\label{fig:figlowerbound11}
\end{subfigure}
\qquad
\vfill
\begin{subfigure}[The heuristics \textit{\pciterativebestsensorddos} before rounding:\enspace M=1]{\includegraphics[scale=0.50]{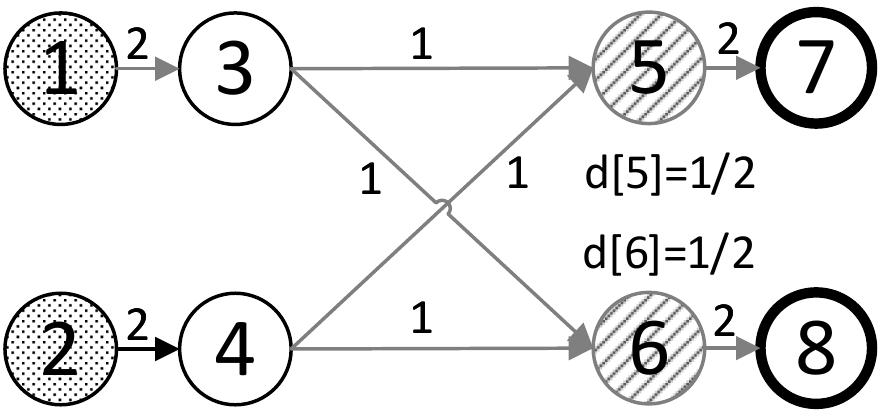}}         
\label{fig:figlowerbound12a}
\end{subfigure}
\qquad
\vfill
\begin{subfigure}[The heuristics \textit{\pciterativebestsensorddos} after rounding:\enspace M=2]{\includegraphics[scale=0.50]{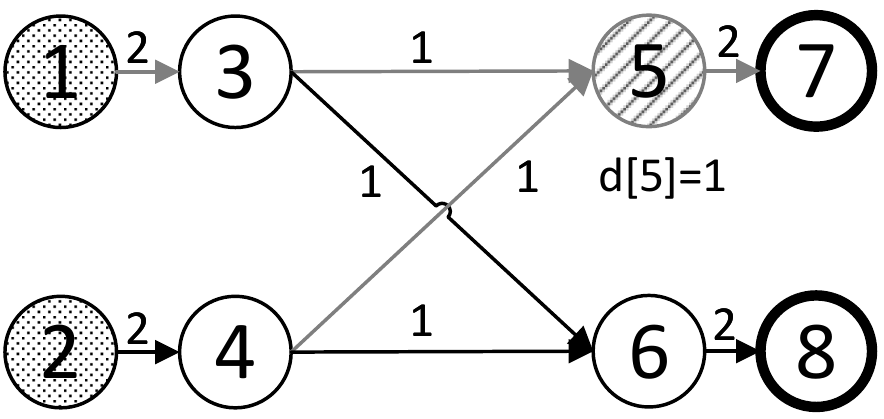}}         
\label{fig:figlowerbound12b}
\end{subfigure}

\caption{The algorithm \textit{\pciterativebestsensorddos} gives a result \mbox{$2\cdot OPT$} (solution (b)(c)), where $M$ is a value of uncontrolled flow, $S=\{1,2\},\enspace T=\{7,8\}, \enspace k=1, 
\enspace$\enspace 
and $D$ is defined by gray striped circles.}%
\label{fig:figpcitboundex1}
\end{figure}

\begin{figure}[!b]
\centering
\begin{subfigure}[The model \textit{\pcddos}: M=2 (OPT)]{\includegraphics[scale=0.5]{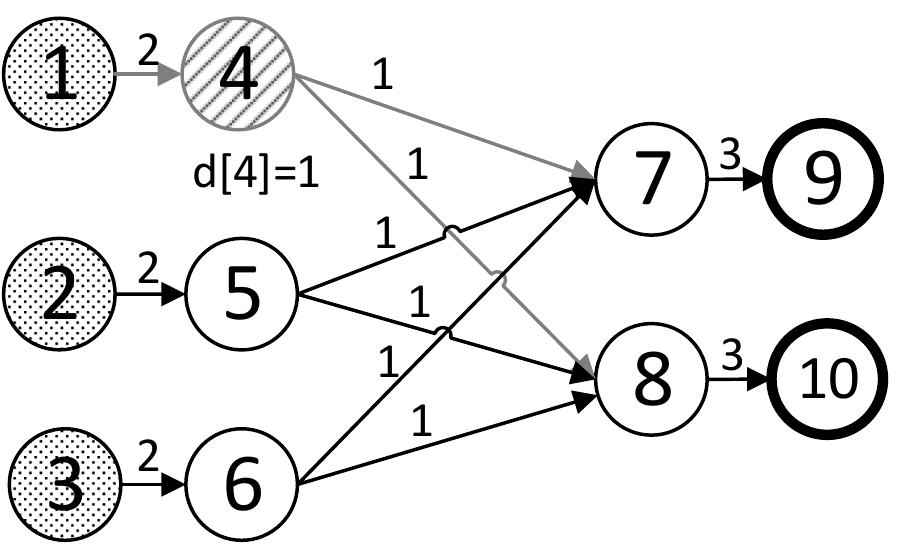}}         
\label{fig:figlowerbound21}
\end{subfigure}
\qquad
\vfill
\begin{subfigure}[The heuristics \textit{\pciterativebestsensorddos} before rounding:\enspace M=1,5]{\includegraphics[scale=0.5]{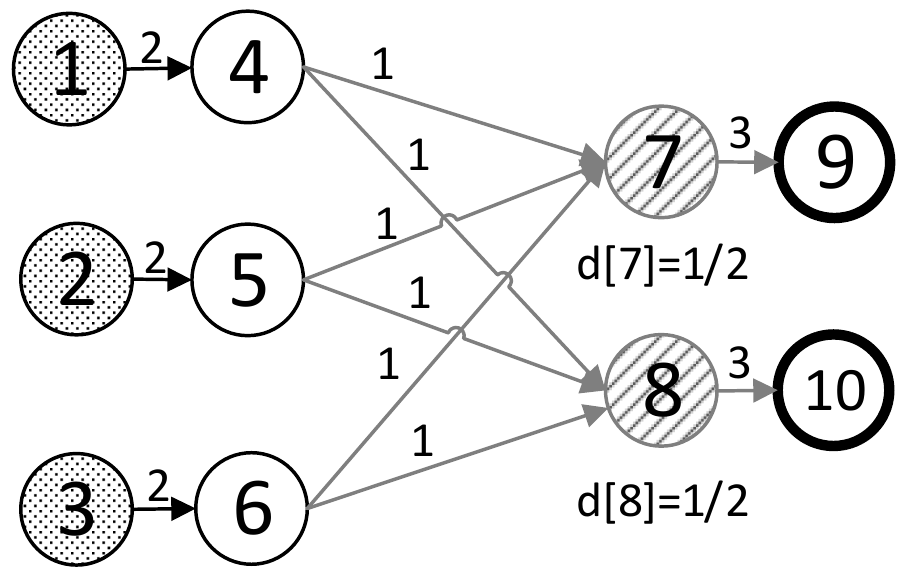}}         
\label{fig:figlowerbound22a}
\end{subfigure}
\vfill
\begin{subfigure}[The heuristics \textit{\pciterativebestsensorddos} after rounding:\enspace M=3]{\includegraphics[scale=0.5]{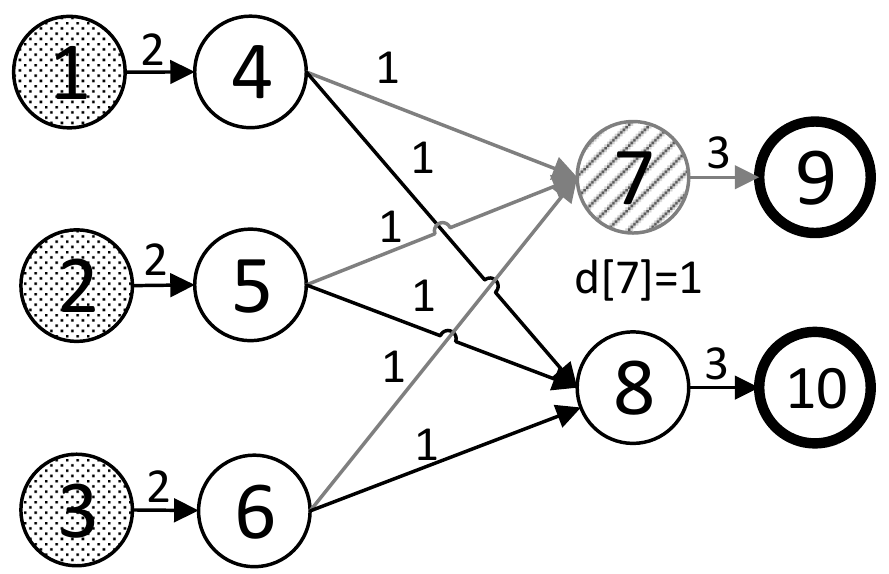}}         
\label{fig:figlowerbound22b}
\end{subfigure}
\caption{The algorithm \textit{\pciterativebestsensorddos} gives a result  \mbox{$\frac{3}{2}\cdot OPT$} (solution (b)(c)), where $M$ is a value of uncontrolled flow, $S=\{1,2,3\},\enspace T=\{9,10\}, \enspace k=1,$\enspace and $D$ is defined by gray striped circles.}%
\label{fig:figpcitboundex2}
\end{figure}

\section{Computational results}\label{computationalresults}

\subsection{Experiment Setup}
\label{experimentsetup}

The following experiments compare efficiency of the models with the algorithms. The  \textit{\pqddos} model is compared with  \textit{\pqiterativebestsensorddos} algorithm, and the \textit{\pcddos} model with the \textit{\pciterativebestsensorddos} algorithm. The comparison assumes ideal (theoretical) sensors, which means that if a sensor is placed in a node it controls $100 \%$ of in/out traffic. However, in practice it depends on the volume of traffic flowing via links, and on the processing capacity of a detection sensor (technology). In   practice,   for   high   volume networks, typically only selected samples are analyzed due to processing limitations.

The two models \textit{\pqddos} and \textit{\pcddos} and two algorithms \textit{\pqiterativebestsensorddos} and \textit{\pciterativebestsensorddos} were run with the use of CPLEX 12.10 for Python. Python 3.7 was utilized to implement heuristics and automate simulations. The simulations were run on a personal computer with 1.9GHz CPU, 16GB RAM and 64-bit Windows platform.

The experiments were conducted on the following types of grid networks: $\net|V|$, where $|V|=\{64, 81, 100, 121, 144, 169, 196, 225, 256, 289\}$ indicates the number of vertices in a network. All these networks are directed graphs, with a single edge in each direction $u,v$ and $v,u$. An example of a small grid network is demonstrated in \lblfigure\space\ref{fig:figgrid9}. Each vertex in a graph may correspond to a router or an autonomous system in a telecommunication network.

For  simulation scenarios, for each network type, four random instances of each network type were generated, each with randomly selected capacities ($c$). Each edge capacity was randomly selected from the range $c(e)_{e \in E} \in [100,200]$ (random selection with uniform distribution). Additionally, for each simulation scenario, four random instances of target locations ($T_{i=1..4} \subseteq V$) were generated (all vertices $V$ have equal probabilities). For each target instance $T_i$, four random instances of source locations were generated ($S_{j=1..4} \subseteq V\setminus T_i$) (all vertices $V\setminus T_i$ have equal probabilities). As a result, each value (volume of uncontrolled flow; execution time) presented on each diagram is an average computed from 64 measurements. Finally, we assumed the following number of targets and sources: Scenario1-4: $|T|=10$, $|S|=40$; Scenario1b,2b: $|T|=10$; Scenario3b,4b: $|T|=20$.

\begin{figure}[!b]
  \centering
  \includegraphics[width=0.45\columnwidth]{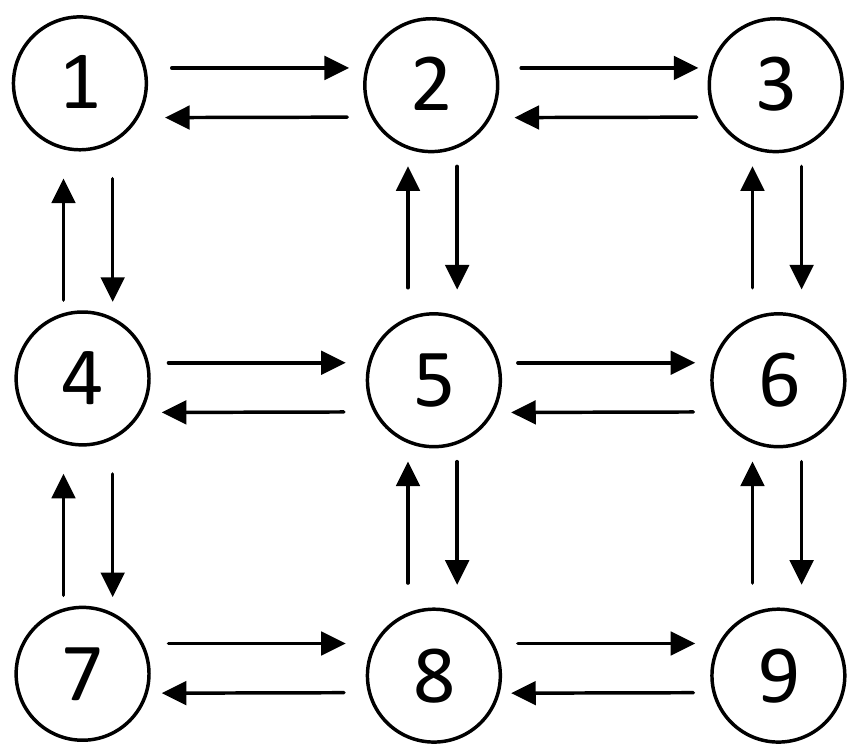}
  \caption{An example of a small grid network $|V|=9$} 
  \label{fig:figgrid9}
\end{figure}

\subsection{Scenario1: \textit{\pcddos} problem, \textit{\net100}, increasing number of sensors}\label{resultsscenario1}

The experiments were conducted for the grid network \textit{\net100}. The number of sensors was increasing from $k=0$ to $k=10$. 

The diagram \lblfigure\space\ref{fig:figscenario14} (a) demonstrates the average volume of uncontrolled traffic (y axis) depending on the number of sensors. As the number of sensors increases, the average volume of uncontrolled traffic decreases to zero (for $k=|T|$), for both \textit{\pcddos} model and \textit{\pciterativebestsensorddos} heuristics. The observed average objective values of \textit{\pciterativebestsensorddos} are higher than those of \textit{\pcddos} by up to $8\%$.

The diagram \lblfigure\space\ref{fig:figscenario14} (b)  demonstrates the average time of execution (y axis).  The observed average values of execution time of \textit{\pcddos} are up to 10 times higher than those of \textit{\pciterativebestsensorddos}.

\subsection{Scenario2: \textit{\pcddos} problem, k=5, increasing size of the grid \textit{\net64, \net81, ... , \net169}}\label{resultsscenario2}

The experiments were conducted for the grid networks: \textit{\net64, \net81, \net100, \net121, \net144, \net169}. The number of sensors was fixed $k=5$. 

The diagram \lblfigure\space\ref{fig:figscenario14} (c)  demonstrates the average time of execution (y axis) as the size of the network increases ($|V|$). As $|V|$ grows, the gap between \textit{\pciterativebestsensorddos} and \textit{\pcddos} increases significantly in favor of the heuristics.

\subsection{Scenario3: \textit{\pqddos} problem, \textit{\net196}, increasing value of quality factor}\label{resultsscenario3}

The experiments were conducted for the grid network \textit{\net196}. The value of quality factor was increasing $q \in \{0.1, 0.2, ... , 1.0\}$. 

The diagram \lblfigure\space\ref{fig:figscenario14} (d) demonstrates the average number of sensors (y axis) required to control the $q$-factor of the network traffic (x axis). 
As the value of $q$-factor increases, the number of required sensors increases on average, for both \textit{\pqddos} model and \textit{\pqiterativebestsensorddos} heuristics. However, at a certain point sensor usage becomes saturated. In the worst observed cases \textit{\pqiterativebestsensorddos} required approximately one sensor more than \textit{\pqddos} to achieve the same quality.

The diagram \lblfigure\space\ref{fig:figscenario14} (e) demonstrates the average time of execution (y axis). The observed average values of execution time of \textit{\pqddos} are up to 5 times higher than those of \textit{\pqiterativebestsensorddos}.

\subsection{Scenario4: \textit{\pqddos} problem, q=0.5, increasing size of the grid \textit{\net121, \net144, ... , \net256}}\label{resultsscenario4}

The experiments were conducted for the grid networks: \textit{\net121, \net144, \net169, \net196, \net225, \net256}. The quality factor was fixed $q=0.5$. 

The diagram \lblfigure\space\ref{fig:figscenario14} (f) demonstrates the average time of execution (y axis) as the size of the network increases ($|V|$). As $|V|$ grows, the gap between \textit{\pqiterativebestsensorddos} and \textit{\pqddos} increases significantly in favor of the heuristics.

%%%%%%%%%%%%%%%%%%%%%%%%%%%%%%%%%%%%%
% scenario1-4 figure
%%%%%%%%%%%%%%%%%%%%%%%%%%%%%%%%%%%%%
% \begin{figure*}[!b]
\begin{figure*}[!t]
\centering
%%%%%Scenario1
\begin{subfigure}[Scenario1: Average volume of uncontrolled traffic]{\includegraphics[scale=0.65]{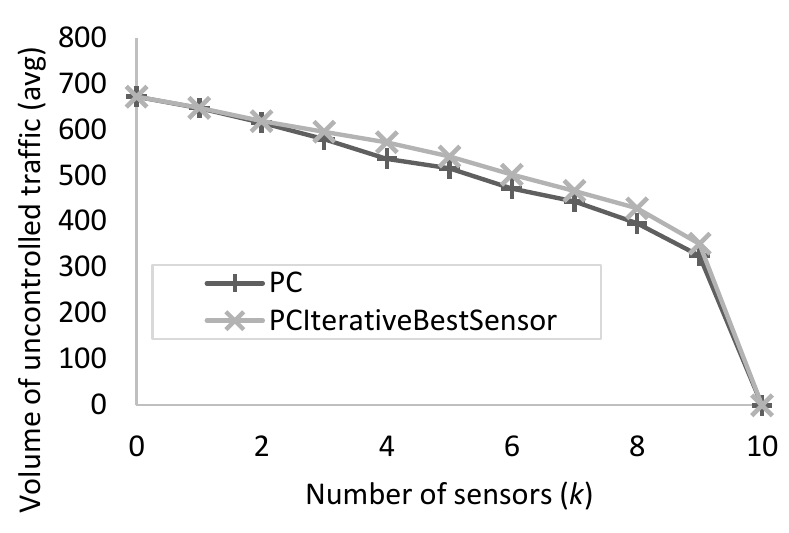}}         
\label{fig:figscenario1volume}
\end{subfigure}
\hfill
\begin{subfigure}[Scenario1: Average time of execution (sec.)]{\includegraphics[scale=0.65]{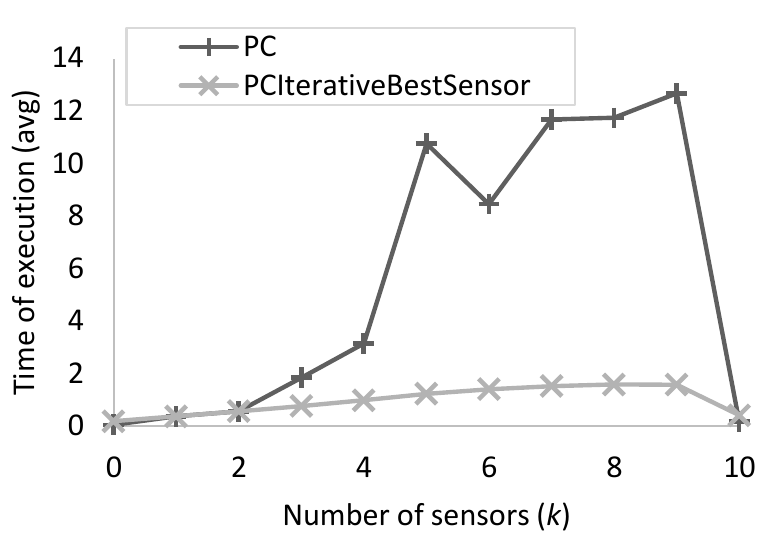}}         
\label{fig:figscenario1time}
\end{subfigure}
\hfill
%%%%%Scenario2
\begin{subfigure}[Scenario2: Average time of execution (sec.)]{\includegraphics[scale=0.65]{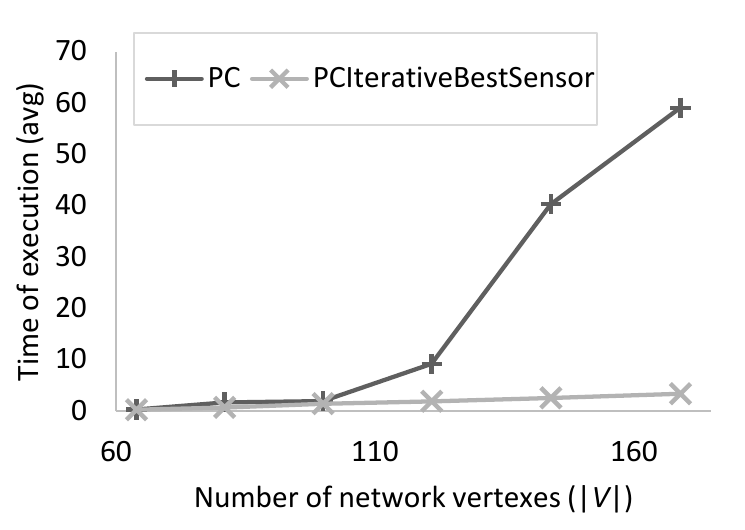}}         
\label{fig:figscenario2time}
\end{subfigure}
\label{fig:figscenario2}%
\vskip 0.5cm
%%%%%Scenario3
\begin{subfigure}[Scenario3: Average number of sensors]{\includegraphics[scale=0.6]{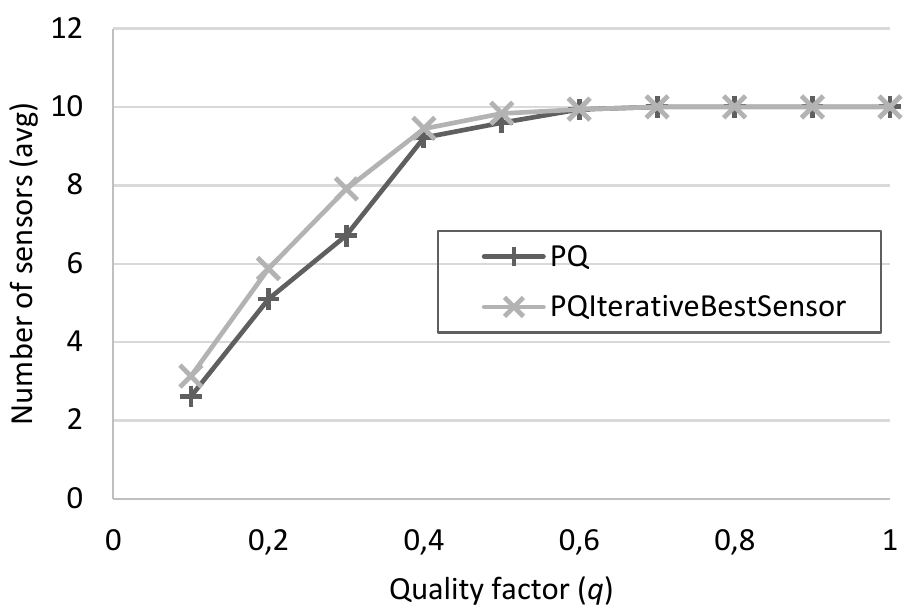}}         
\label{fig:figscenario3sensors}
\end{subfigure}
\hfill
\begin{subfigure}[Scenario3: Average time of execution  (sec.)]{\includegraphics[scale=0.6]{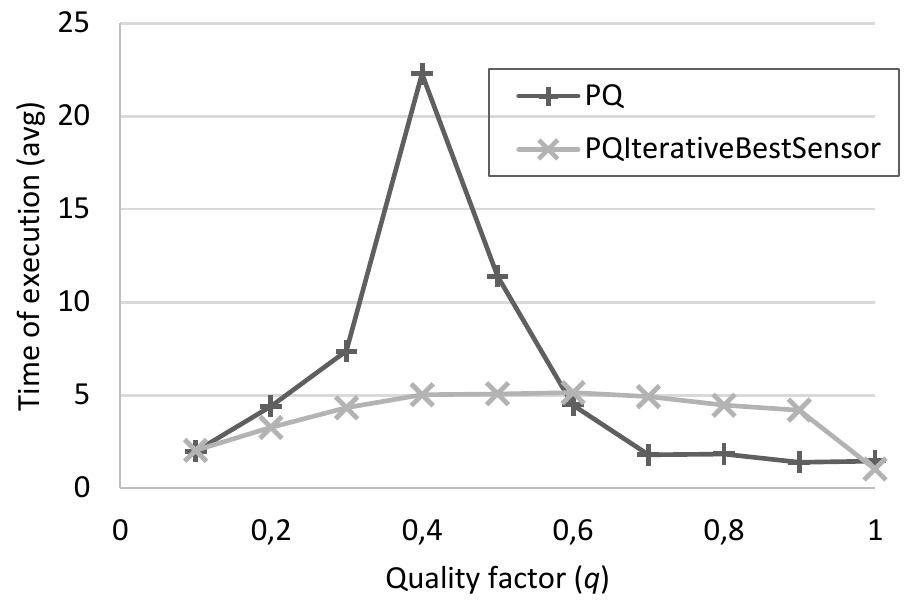}}         
\label{fig:figscenario3time}
\end{subfigure}
\hfill
%%%%%Scenario4
\begin{subfigure}[Scenario4: Average time of execution (sec.)]{\includegraphics[scale=0.65]{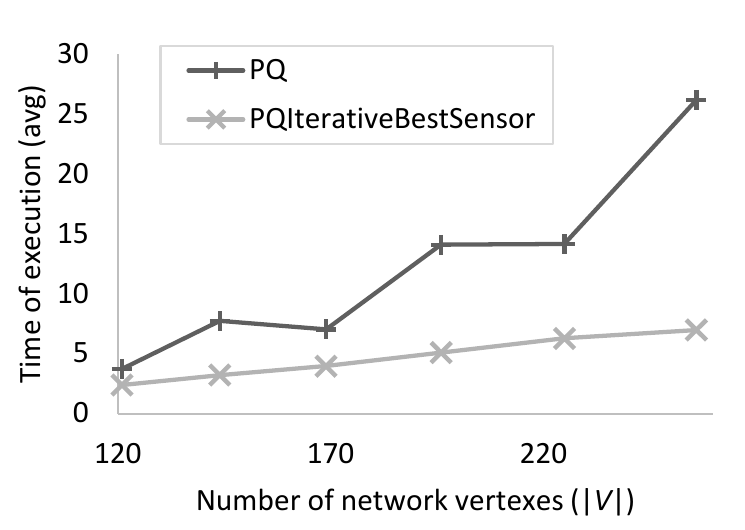}}         
\label{fig:figscenario4time}
\end{subfigure}
\caption{Scenario1-4}%
\label{fig:figscenario14}%
\end{figure*}

\subsection{Scenario1b-4b: super source formulation}\label{resultsscenariosb}
In general, we would like to assume, that network flooding targeted at protected nodes $t\in T$ can start from any network node ($source$) $s\in V\setminus T$. In practical scenarios however, we may want to limit  attention to a set of sources $S \subseteq V\setminus T$. For example, after conducting a network risk analysis, we may know that some sources (autonomous systems, sub-networks) are more hostile than others. For experiment purpose, we applied two methods of source selection. 

First, explicit selection, as used  in the experiments $1-4$ (section \ref{resultsscenario1}, \ref{resultsscenario2}, \ref{resultsscenario3} and \ref{resultsscenario4}). We selected subsets of vertices as sources $|S|=40$. The sources were selected randomly with uniform distribution from set $V\setminus T$. 

Second, instead of selecting a set of sources $S$ explicitly, we can limit the portion of traffic we want to monitor from each source $s\in V\setminus T$ based on risk analysis $R:V\to [0,1]$  (see below \sssformulation for details). This method was applied within scenarios $1b-4b$. The experiments $1b-4b$ were conducted with the following assumptions: Scenario1b: \textit{\net100}, the number of sensors from $k=0$ to $k=10$;  Scenario2b: $k=5$, size of the grid \textit{\net64, \net81, ... , \net169};  Scenario3b: \textit{\net289}, the value of quality factor $q \in \{0.1, 0.2, ... , 1.0\}$; Scenario4b: $q=0.5$, size of the grid  \textit{\net144, \net169, ... , \net256}. 

Algorithms efficiency  demonstrated in experiments $scenario1b-4b$ (\lblfigure\space\ref{fig:figscenario1b4b}) is similar to that demonstrated in experiments $scenario1-4$ (\lblfigure\space\ref{fig:figscenario14}).

% \vfill
% \noindent 
\textbf{\camelsssformulation:}\label{sssformulation}
With a standard trick  the problem can be reduced to an equivalent one, with a single source. Having a graph $G=(V,E)$ and a risk analysis as a function $R:V\to [0,1]$, we create a new graph $G'=(V\cup \{ss\}, E\cup \{(ss, v)\}_{v\in V \setminus T})$, where $ss$ is an artificial super vertex, and capacities of edges in $\{(ss, v)\}_{v \in V\setminus T})$ are given by: 
\begin{equation}\label{eq:sssnewedges}
    \forall_{v \in V \setminus T}\enspace c(ss, v)=R(v)\cdot \sum_{u: (v,u)\in E} \ c(v,u).
\end{equation}
For the graph $G'$ we assume a single attack source $S=\{ss\}$. Within $G'$ we simply limit vertex production (possible outgoing flow value) according to its risk value.

In case this formulation is used to characterize the attack sources we need to add additional restriction (equation \ref{eq:sssconstraint}) to both \textit{\pqddos} and \textit{\pcddos} models (models  described in section \ref{modelsdescription}). This is required since the super source vertex $ss$ in graph $G'$ is an artificial vertex and in fact  a sensor can not be placed  in it. The same restriction (\ref{eq:sssconstraint}) applies to both algorithms \textit{\pqiterativebestsensorddos} and \textit{\pciterativebestsensorddos} (section \ref{algdesc}).

\begin{equation}\label{eq:sssconstraint}
    d[ss]=0
\end{equation}

\begin{figure*}[!t]
\centering
%%%%%Scenario1b
\begin{subfigure}[Scenario1b: Average volume of uncontrolled flow]{\includegraphics[scale=0.75]{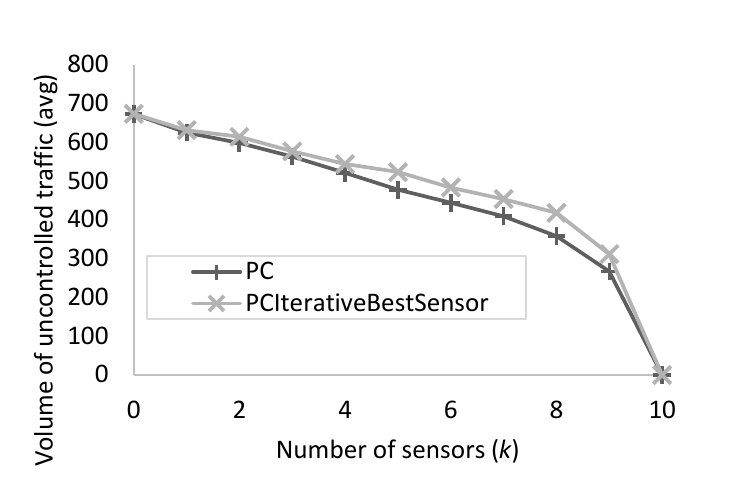}}         
\label{fig:figscenario1bvolume}
\end{subfigure}
\hfill
\begin{subfigure}[Scenario1b: Average time of execution (sec.)]{\includegraphics[scale=0.75]{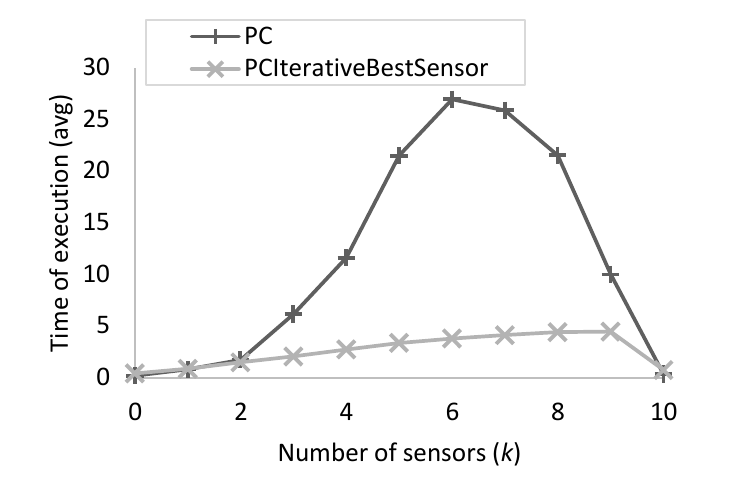}}         
\label{fig:figscenario1btime}
\end{subfigure}
\hfill
%%%%%Scenario2b
\begin{subfigure}[Scenario2b: Average time of execution (sec.)]{\includegraphics[scale=0.65]{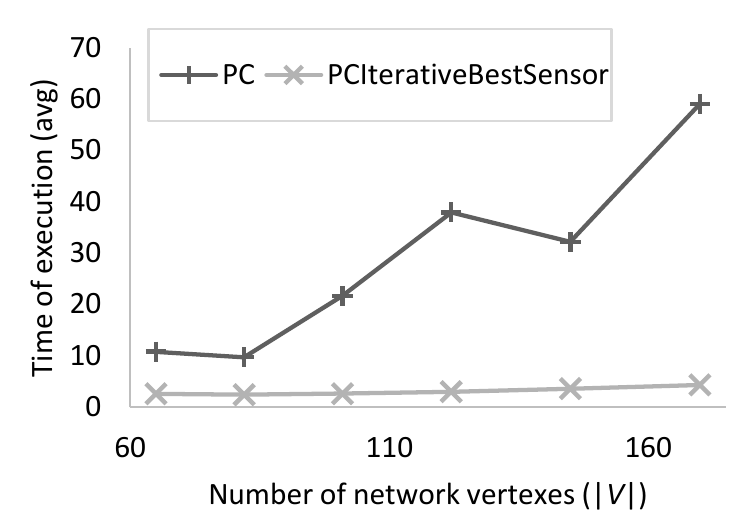}}         
\label{fig:figscenario2btime}
\end{subfigure}
\vskip 0.5cm
%%%%%Scenario3b
\begin{subfigure}[Scenario3b: Average number of sensors]{\includegraphics[scale=0.75]{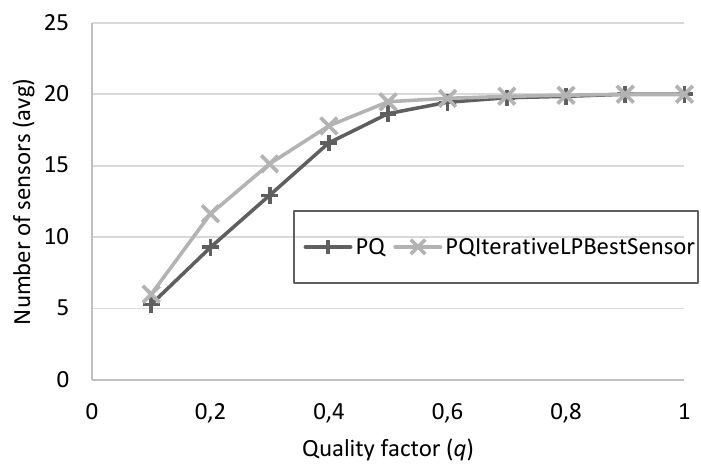}}         
\label{fig:figscenario3bsensors}
\end{subfigure}
\hfill
\begin{subfigure}[Scenario3b: Average time of execution  (sec.)]{\includegraphics[scale=0.75]{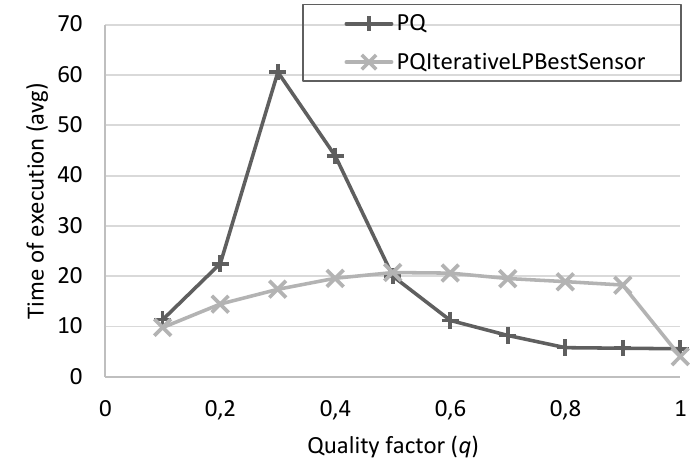}}         
\label{fig:figscenario3btime}
\end{subfigure}
\hfill
%%%%%Scenario4b
\begin{subfigure}[Scenario4b: Average time of execution (sec.)]{\includegraphics[scale=0.65]{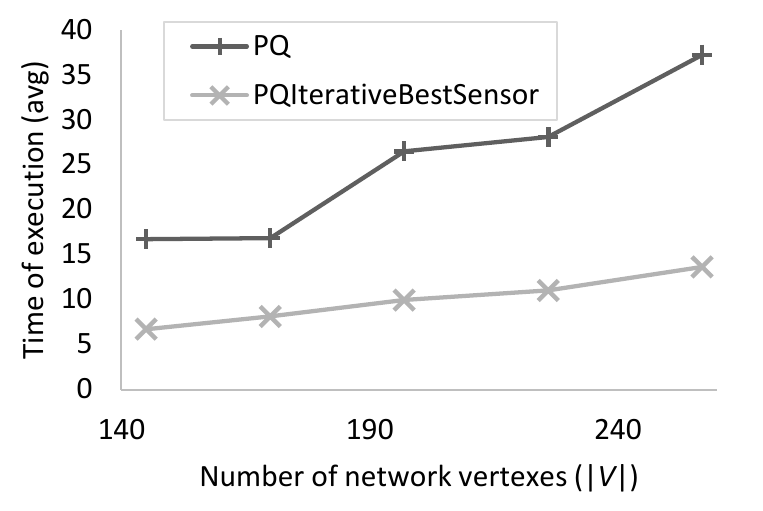}}         
\label{fig:figscenario4btime}
\end{subfigure}
\caption{Scenario1b-4b}%
\label{fig:figscenario1b4b}%
\end{figure*}

\subsection{Summary of simulation results}
\label{summaryofsim}

The \textit{\pcddos} algorithm simulations led to a number of observations.
Firstly, for all test networks, as the number of  sensors increases, the volume of uncontrolled traffic decreases to zero, for both \textit{\pcddos} model and \textit{\pciterativebestsensorddos} heuristics.
Secondly, the observed average objective values of \textit{\pciterativebestsensorddos} are higher than those of \textit{\pcddos} by up to $8\%$ for tested networks.
Finally, as the size of the grid network increases, for fixed $k$, the execution time gap between \textit{\pciterativebestsensorddos} and \textit{\pcddos} increases significantly in favor of the heuristics.

The \textit{\pqddos} algorithm simulations led to the following observations. 
Firstly, as the quality factor increases, the number of sensors increases on average, however, at a certain point sensor usage becomes saturated, for both \textit{\pqddos} model and \textit{\pqiterativebestsensorddos} heuristics.
Secondly, in the worst observed cases the \textit{\pqiterativebestsensorddos} required approximately one sensor more than \textit{\pqddos} to achieve the same quality.
Finally, as the size of the grid network increases, for fixed $q$, the execution time gap between \textit{\pqiterativebestsensorddos} and \textit{\pqddos} increases significantly in favor of the heuristics.
\section{Conclusions}
We give a proof that the sensor placement problem is NP-complete. Additionally, we prove that the optimization problem admits no polynomial-time 2-approximation algorithm, unless $P\neq NP$. So, few natural questions arise: is there a better exact algorithm than brute-force? Can the number of sensors be approximated with any constant?

Although the problem is computationally hard it can be efficiently solved with the use of a mixed integer programming solver for medium-sized networks.  As demonstrated for the tested grid networks, computation time is not high and qualifies both \textit{\pcddos} and \textit{\pqddos} models for practical applications. The models respond to the challenges of the real DDoS problem. One challenge is that an attack can be conducted from any network node. The other is that sensors are expensive and placing them in all  network nodes is not possible in many cases. Sensors can be placed dynamically, based on perceived network indicators (e.g., risk factor).
The models expose a highly desirable feature, such that dislocation of relatively small number of sensors (proportional to the number of protected nodes) can obtain a significant quality. Both models lead to a trade-off between the number of deployed sensors and the volume of uncontrolled flow. 

Additionally to two models, we designed two efficient solver-based heuristics  (one  for  each  problem). For large networks, the execution time gap between the two models and their corresponding heuristics increases significantly in favor of the heuristics.

\paragraph*{Acknowledgments.}
The work of Dariusz Nogalski was partially supported by the statutory activity of the Military Communications Institute financed by the Ministry of Science and Higher Education (Poland). Pawe{{\l{}}} Rz\k{a}{{\.z}}ewski was supported by a project that received funding from the European Research Council (ERC) under the European Union's Horizon 2020 research and innovation program Grant Agreement 714704.

\begin{textblock}{20}(1.2, 7)
\includegraphics[width=40px]{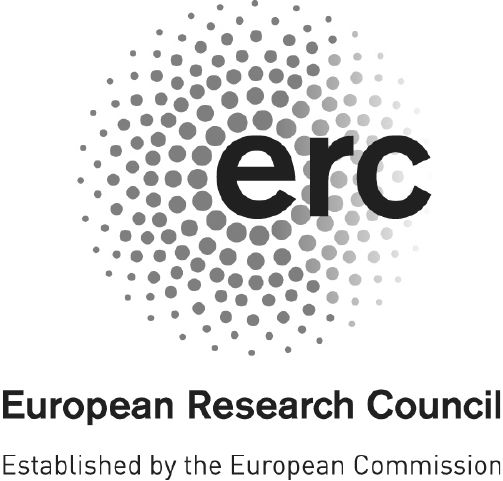}%
\end{textblock}
\begin{textblock}{20}(1.2,8)
\includegraphics[width=40px]{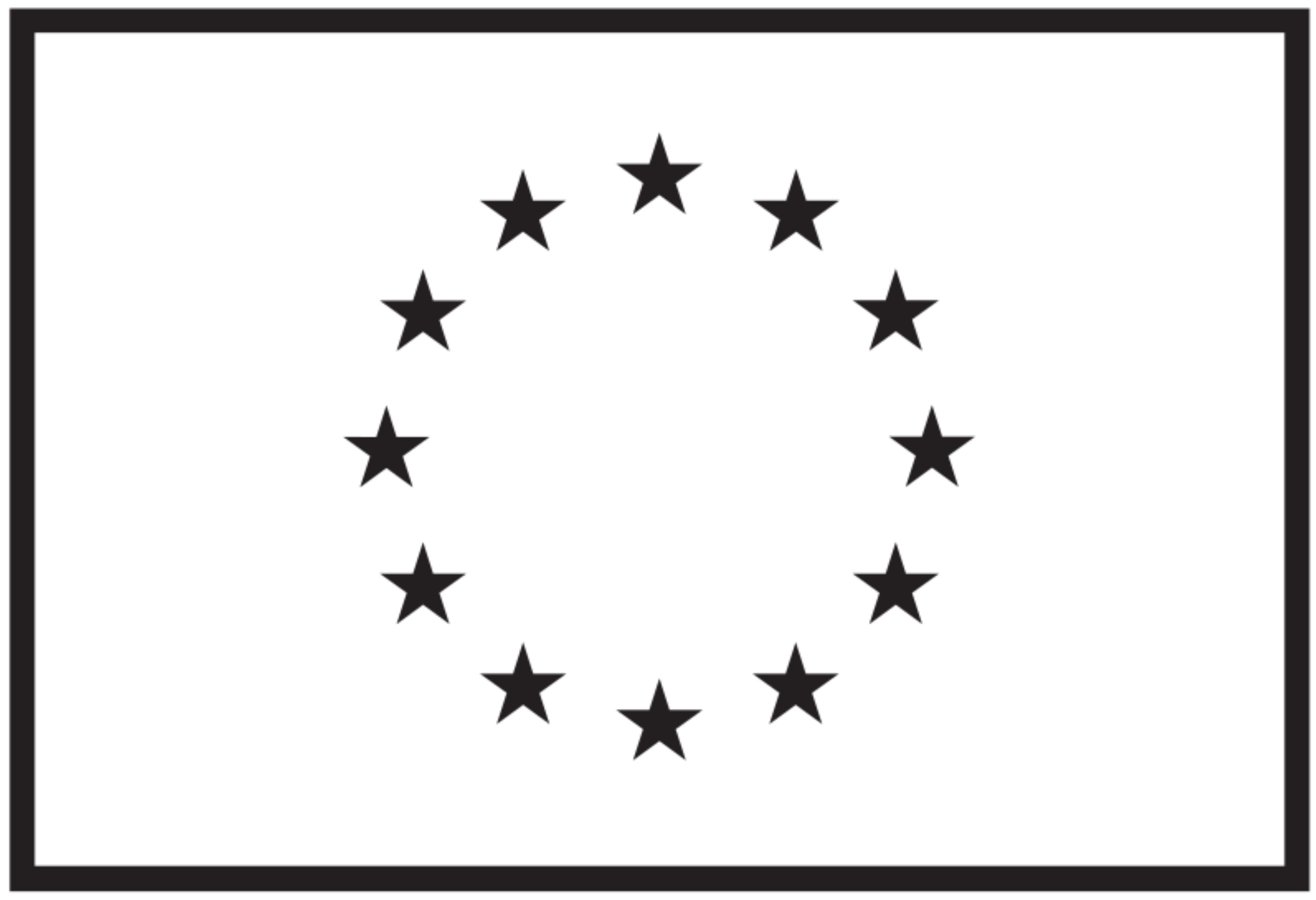}%
\end{textblock}

\bibliography{ddos}
\end{document}